\tikzstyle{arrow}=[->, draw=blue]
\title{Quantum Advantage with Faulty Oracle}
\author[$\dagger$]{David Rasmussen Lolck}
\author[$\dagger$]{Laura Man\v{c}inska}
\author[$\dagger$]{Manaswi Paraashar}
\affil[$\dagger$]{University of Copenhagen, Denmark}
\date{}
\begin{document}

\maketitle

\begin{abstract}
    This paper investigates the impact of noise in the quantum query model, a fundamental framework for quantum algorithms. We focus on the scenario where the oracle is subject to non-unitary (or irreversible) noise, specifically under the \textit{faulty oracle} model, where the oracle fails with a constant probability and acts as identity. Regev and Schiff (ICALP'08) showed that quantum advantage is lost for the search problem under this noise model. Our main result shows that every quantum query algorithm can be made robust in this noise model with a roughly quadratic blow-up in query complexity, thereby preserving quantum speedup for all problems where the quantum advantage is super-cubic. This is the first non-trivial robustification of quantum query algorithms against an oracle that is noisy.
\end{abstract}

\section{Introduction}

Understanding and mitigating noise is one of the most important challenges in quantum computing today and the quantum query model is one of the most natural models of computation to study the effects of noise.
% In this model, the goal is to decide whether an unknown function $f :\{0,1\}^n \to \{0,1\}$ satisfies some fixed property, when the function can only be accessed by a unitary matrix $O_f$, called \textit{oracle}, that satisfies $O_f \ket{z}\ket{b} = \ket{z} \ket{b \oplus f(x)}$ for all $x \in \{0,1\}^n$ and $b \in \{0,1\}$.
In this model, the goal is to solve a problem $P$ on an unknown input $f \in \{0,1\}^n \to \{0,1\}^m$. The input $f$ can only be accessed by a unitary matrix $O_f$, called \textit{oracle}, that satisfies $O_f \ket{z}\ket{b} = \ket{z} \ket{b \oplus f(z)}$ for all $z \in \{0,1\}^n$ and $b \in \{0,1\}^m$, where $z \oplus f(z)$ denotes the bitwise XOR of $b$ and $f(z)$. 
Furthermore, the goal is to do so with bounded-error probability\footnote{By bounded-error we mean correctly with probability at least $4/5$. Algorithms that have success probability $(1/2 + \delta)$ can be made bounded-error by amplitude amplification~\cite{brassard2002quantum} with an overhead of $O(1/\delta)$.}. by making minimum number queries to $O_f$. 
Most of the provable advantages of quantum algorithms over classical ones are in this model (for example, Grover's algorithm for the search problem (\cite{grover1996fast}) and Shor's Algorithm for the period finding problem (\cite{shor1999polynomial})) and this model is naturally connected to many other areas of quantum computing due to its simplicity. Thus, understanding the effects of noise in the query model is a fundamental question.

In this paper, we are interested in the case when the oracle is affected by noise\footnote{There has been work dealing with noise affecting other components of query algorithms (see~\cite{R03} and the references therein).}.
There are two types of noise that may affect the oracle: unitary (or reversible) and non-unitary (or irreversible). An example of a reversible noisy oracle is a low-depth quantum circuit implemented on a fault-tolerant quantum quantum computer. However, when such a circuit is implemented on a quantum computer that is not fault-tolerant and is affected by noise due to its interaction with the environment, we have an oracle with non-reversible noise.  The case of reversible noise is well-understood and easy to mitigate: a quantum algorithm designed for a fault-free oracle with cost $q$ can be easily modified to give an algorithm for a with cost $O(q \ln q)$ that works correctly on an oracle with reversible noise (\cite{hoyer2003quantum, suzuki2006robust, buhrman2007robust, iwama2006query}).

Understanding irreversible noise in the query model has proved to be much more challenging.
The majority of existing results are limited to the study of the effects %of various types of 
irreversible noise for Grover's algorithm~\cite{PR99, LLZT00, SBW03, FG98, ABNR12, KNR18} or for the search problem itself~\cite{RS08, VRRW14, R03}.
To the best of our knowledge, search is the only problem for which lower bounds under a noisy oracle are known. The work~\cite{RS08} gave the first such lower bound under the following noise model: we are given access to a faulty oracle $O_{f,p}$ which acts as follows
% which with probability $p$ fails and acts as identity, while with probability $(1-p)$ acts as $O_f$
\begin{align*}
    O_{f,p}(\cdot) = \begin{cases}
    O_f(\cdot) & \text{ with probability $(1-p)$}\\
    I(\cdot) & \text{ with probability $p$},
\end{cases} \tag*{(Faulty Oracle) \qquad}
% \label{eq: Intro Defn RS Noise}
\end{align*}
where $I$ is the identity map, i.e., $I(\rho) = \rho$ for all matrices $\rho$.
% Generally speaking, these papers show that if noise occurs at a large enough rate, then quantum advantage is lost for the search problem. We refer to~\cite{R03} for a survey of these results. A landmark result in this line of work was~\cite{RS08} which gave the first lower bound for the search problem under the following noise model: we are given access to an oracle $O_{f,p}$ which with probability $p$ fails and acts as identity, while with probability $(1-p)$ acts as $O_f$. 
Similar to~\cite{RS08}, we refer to this error model as the \textit{faulty oracle} model.
The parameter $p$ is called the error rate. The faulty oracle model, when the error rate $p$ is at most  $1/2$, is the central object of study in this paper. 
It is natural to consider the regime where the error-rate is a constant independent of the system size. This is what we focus on.

% It is natural to consider the regime where the error-rate is independent of the system size as we do not expect to be able to build larger quantum systems more reliably than smaller ones (so in practice the error rate should stay the same or increase with the size of the system, prior to error correction).
% \lauracom{I tried to rephrase.} 
%This range of error rate is perhaps the most natural to consider since we do not expect to build systems where the error-rate would scale down with the system size.}\davidcom{There is something weird in this sentence. If the error rate increases, this would correspond to $p$ increasing, which doesn't seem to correspond to what is being written} \manaswicom{I guess what I mean is that systems of larger size should have larger error, i.e. $p$ should be large. Thus error-rate should not scale down as size of system increases. Perhaps we should write this more clearly?}
% This noise model, specifically when $p = 1/2$, is the central object of study in this paper. 
% \davidcom{This makes it seem like we can only handle error of $p=1/2$. We should probably make it clear that it can be easily extended to $p\le 1/2$.} \manaswicom{Sure. I wanted to stress that our results hold for constant $p$. I have changed the last line.}

The work \cite{RS08} showed an $\Omega(p2^n)$ lower bound for the search problem in the faulty oracle model. This means that quantum advantage is lost when the error rate is a constant. This lower bound has impacted several areas of research in quantum computing. Quantum random access memory (qRAM) that stores classical information, i.e.~bits, was considered in~\cite{AGJMP15}. Such a qRAM can be used as an oracle for quantum query algorithms. They defined an error model that is quite similar to the faulty oracle model. Using the lower bound of~\cite{RS08}, it was shown in~\cite{AGJMP15} that a qRAM that produces queries with constant error will not lead to a quantum advantage for the search problem. Furthermore,~\cite{AGJMP15} conjectured that their error model nullifies the asymptotic speed-ups of other quantum query algorithms as well.
% ~\cite{xu2023systems} emphasis that, in view of the lower bound of~\cite{RS08}, it is critical to guarantee the error robustness of qRAM, via either intrinsic noise resilience or error correction.
Most supervised quantum machine learning algorithms assume the existence of qRAM (\cite{GLM08}). In view of this and~\cite{RS08}, the authors of~\cite{wossnig2021quantum} suggested that even small errors can crucially impact quantum machine learning algorithms.
Query complexity of the search problem has been studied under noise models other than the faulty oracle model:~\cite{R03} gave an $\Omega(p 2^n)$ lower bound for the search problem under the depolarizing noise model.

A natural question to ask is whether there are problems for which a quantum advantage is retained even under a noisy oracle.

\begin{center}
    \textit{\textbf{Question.} Is there any reasonable fault model for which a quantum speed-up is achievable? (\cite{RS08}) Is a significant quantum speedup is ever possible with a faulty oracle? (\cite{childs2023streaming})}
\end{center}

If we restrict our attention to quantum algorithms where the (parallel) query depth is upper bounded by a constant, for example Simon's algorithm~\cite{simon1997power}, then the answer to the question is indeed yes (with a polynomial overhead) under the faulty oracle model.~\cite{childs2023streaming} studied the case when the oracle has depolarising noise. They showed that quantum algorithms with constant query depth can be made robust with a polynomial overhead. However, their techniques do not extend to algorithms that have $\omega(1)$ query depth without a substantial overhead. Prior to our work, there were no known techniques allowing one to maintain quantum advantage in the noisy regime beyond the case of constant (parallel) query depth.
% \manaswicom{Name the noise model as RS-Noisy Oracle? Define other noise models? TODO: Mention that we are dealing with constant error rate.} \davidcom{So far I have called the noise for ``faulty oracle'', but I am not sure if there is a better name for it}

The main contribution of this paper is to show that in the faulty oracle model, a quantum advantage is retained for \textit{every} quantum query algorithm that is sufficiently fast, in the regime of a constant error rate $p\leq 1/2$.
%(given this, the case when $p < 1/2$ follows trivially). 
In the next section, we present our main result which shows how to make a quantum algorithm robust to faulty oracle with a roughly quadratic blow-up in query complexity.
% Thus we answer the \textbf{Question.} is affirmative for the extensively studied faulty oracle model for \textit{all} quantum algorithms that are sufficiently fast\footnote{Mention quantitatively}. In the next section, we describe our main results.

\section{Our Results}

We state our main theorem below.
% , referring to Theorem~\ref{thm:bound-algorithm} for a general statement.
% \textcolor{red}{Manaswi: To discuss. Add this theorem in Section~\ref{sec:any-algorithm}? Or state the full theorem, Theorem~\ref{thm:bound-algorithm}, here and mention the following as a corollary?}
% \textcolor{red}{To discuss: Should we rename the last section, Section~\ref{sec:any-algorithm} as Proof of Theorem~\ref{thm: Informal Main Theorem}?}

\begin{theorem}
\label{thm: Informal Main Theorem}
    Consider any $f : \{0,1\}^n \to \{0,1\}^m$ and a bounded-error quantum query algorithm $\mathcal{A}$ that solves a problem $P$ by making $q$ queries to a (fault-free) oracle $O_f$. Then there exists a bounded-error quantum algorithm $\mathcal{A}'$ that solves $P$ and makes $O(q^3 m^2 \ln q)$ queries to the faulty oracle $O_{f,p}$ where $p \leq 1/2$.
\end{theorem}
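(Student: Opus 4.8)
The plan is to reduce the problem to robustly simulating a \emph{single} clean query and then stitch the simulations together with a hybrid argument. Writing $\mathcal{A}$ as $U_q O_f U_{q-1}\cdots U_1 O_f U_0$, I would replace each invocation of $O_f$ by a subroutine $\widetilde{O}$, built from faulty queries, whose induced channel is within diamond-norm distance $\epsilon$ of the ideal query channel. Since errors accumulate at most additively across the $q$ layers (a standard telescoping argument), choosing $\epsilon = \Theta(1/q)$ keeps the final state within constant trace distance of the output of $\mathcal{A}$, so one run still succeeds with probability bounded above $1/2$ and can be boosted to $4/5$ by amplitude amplification. The total cost is then $q$ times the cost of one $\epsilon$-accurate query simulation, so the whole game reduces to implementing a single clean query using $\mathrm{poly}(1/\epsilon, m)$ faulty queries.

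For the single-query simulation I would first reduce the $m$-output oracle to $m$ Boolean (phase) oracles $O_{f_1},\dots,O_{f_m}$ and handle each bit $f_j$ separately; together with the per-bit error budget $\epsilon/m$ this is where the factor $m^2$ is meant to come from. The elementary building block is to evaluate $f_j(z)$ \emph{reliably} into a fresh ancilla despite the failures. Because a faulty query leaves the target untouched with probability $p \le 1/2$, I would issue $t$ queries into $t$ fresh ancilla bits and take the OR of the results: the ancilla records $f_j(z)$ correctly as long as at least one of the $t$ queries fires, which happens with probability $1 - p^t \ge 1 - \epsilon$ for $t = O(\log(1/\epsilon))$ since $p^t \le 2^{-t}$. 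This is what lets a constant failure rate be beaten down, and it accounts for the $\ln q$ factor once $\epsilon = \Theta(1/q)$.

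The crux, and what I expect to be by far the hardest step, is turning this \emph{reliable computation} into a \emph{coherent unitary} query. The naive compute--phase--uncompute strategy fails: the faulty queries used to write $f_j(z)$ into the ancilla cannot be cleanly reversed, because whether each query fired is recorded irreversibly in the environment and is correlated with $f_j(z)$. Tracing out that record dephases the state in the eigenbasis of $O_{f_j}$ --- precisely the mechanism underlying the Regev--Schiff lower bound --- so a direct implementation only reproduces a dephasing channel rather than the clean reflection, and one checks that the cross-sector coherence is damped by a factor $|2p-1|<1$ per ``used'' query, which would destroy the algorithm after $q$ steps. Any correct approach must therefore arrange that the which-fired information leaked to the environment becomes asymptotically uncorrelated with the query's action.

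To overcome this I would apply each query \emph{gently}: rather than imprinting the full phase $(-1)^{f_j(z)}$ in one shot, use the reliably computed ancilla to apply a small controlled rotation of angle $\phi$ and repeat $k$ times, so that the accumulated phase $k\phi$ reaches $\pi$ while the induced decoherence, being quadratic in the per-step interaction strength, scales like $k\phi^2 = \pi^2/k$ and hence is suppressed as $k$ grows. Equivalently, one can amplitude-amplify the event that a query is applied cleanly and verify it against a reference branch. Driving the residual coherence loss of a single clean query below $\epsilon$ in this way should cost $\mathrm{poly}(1/\epsilon, m)$ faulty queries; with $\epsilon = \Theta(1/q)$ and the reduction to $m$ Boolean oracles this yields the claimed $O(q^3 m^2 \ln q)$ bound. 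Making the gentle (or amplified) implementation rigorous --- in particular proving the quadratic rather than linear suppression of decoherence, and controlling how its small residual error composes through the hybrid argument --- is the main technical obstacle I would have to resolve.
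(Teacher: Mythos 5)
Your outer scaffolding matches the paper's: replace each clean query by an $\epsilon$-accurate simulation with $\epsilon=\Theta(1/q)$, telescope the error across the $q$ layers, split the $m$-bit oracle into $m$ Boolean oracles with per-bit budget $\epsilon/m$, and pay a logarithmic factor for the confidence parameter. You also diagnose the central obstacle correctly: the simulation must arrange that the environment's record of which faulty queries fired becomes asymptotically uncorrelated with $f_j(z)$. The gap is that your single-query primitive does not achieve this. If you write $f_j(z)$ into $t$ fresh ancillas and take their OR, then conditioned on the firing pattern $s$ the ancilla block is $\bigotimes_i \ket{s_i f_j(z)}$; for any $s\neq 0^t$ this block is orthogonal between the $f_j(z)=0$ and $f_j(z)=1$ branches, i.e.\ it perfectly distinguishes the two values. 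It cannot be reliably uncomputed (a second round of faulty queries fires according to an independent random pattern), so discarding it dephases the $Z$-register in the $f_j$ basis essentially completely in a \emph{single} round, no matter how small you make the controlled rotation $\phi$. The Zeno-type $k\phi^2$ suppression you invoke requires the per-step coupling to the discarded system to be weak; here the coupling to the ancilla block is maximal (it learns the whole bit) and only the subsequent \emph{use} of that block is weak, which does not help. The cost accounting also does not close: with $k=O(q)$ gentle steps and $O(\log(1/\epsilon))$ queries per step you would get roughly $q^2 m\log q$ total, not the claimed bound, which signals that the $q^2$ per-query overhead has to come from somewhere else.

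In the paper that $q^2m^2$ comes from a random-walk concentration bound, via a mechanism that never writes $f(z)$ into a register that must be discarded. The target is rotated to $\ket{+}$, which is an eigenvector of the reflection that $O_f$ implements on $\mathrm{span}\{\ket{z,0},\ket{z,1}\}$ when $f(z)=1$; one then interleaves $t$ faulty oracle calls with tiny rotations $R_{\pi/2t}$. With $p$ boosted to exactly $1/2$ the signed angle from $\ket{+}$ performs a symmetric $\pm\frac{\pi}{2t}$ random walk, so after $t$ steps the absolute angle is $O\bigl(\frac{\pi}{2t}\sqrt{t\ln(1/\delta)}\bigr)$ with probability $1-\delta$; taking $t=O(q^2m^2\ln q)$ makes the output $O(1/(qm))$-close to $\ket{z,f(z)}$ for every typical firing pattern --- precisely the decorrelation property you identified as necessary. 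A compute--CNOT--uncompute wrapper then handles arbitrary target states. So your reduction and error budget point in the right direction, but the core primitive must be replaced by something with this eigenvector-plus-random-walk structure; as written, the OR-plus-gentle-phase scheme reproduces the dephasing channel behind the Regev--Schiff obstruction rather than circumventing it.
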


% \begin{theorem}[\textbf{Informal}]
% \label{thm: Informal Main Theorem}
%     Let $g : \cD \to \{0,1\}^m$, where $\cD \subseteq \{0,1\}^n$, such that there exists a bounded-error quantum query algorithm that computes $g(x)$ for all $x \in \cD$ by making at most $q$ queries to a fault-free oracle $O_x$. Then there exists a bounded-error quantum algorithm that computes $g(x)$ for all $x \in \cD$ by making $O(q^3 m^2 \log q)$ many queries to a faulty oracle $O_{x,p}$, with error-rate $p \leq 1/2$.
% \end{theorem}

% In particular, the query cost of a quantum algorithm for $f$ robust to noisy oracle is $O(Q(g)^3 \log Q(g))$, where $Q(g)$ is the quantum query complexity of $g$ (see Section~\ref{} of the full-version of the paper for the definition of query complexity).

While quantum advantage is lost for the search problem with a faulty oracle (\cite{RS08}), Theorem~\ref{thm: Informal Main Theorem} implies that a quantum advantage is retained for all problems for which there is a quantum algorithm with roughly super-cubic advantage. Furthermore, no additional assumptions are made about the structure of the algorithms, such as bounded query depth as considered in~\cite{childs2023streaming}, in Theorem~\ref{thm: Informal Main Theorem}. There are several quantum query algorithms that have super-cubic, super-polynomial, and even exponential advantage over their classical counterparts, and from Theorem~\ref{thm: Informal Main Theorem}, a quantum advantage is retained for these algorithms. We refer the reader to~\cite{Zoo} which contains an extensive collection of such quantum algorithms.

The following question was asked in~\cite{RS08}.
% \manaswicom{Include some algorithms not in this list.} \manaswicom{Include some algorithms from property testing.}

\begin{openproblem}[\cite{RS08}]
    Is there any search problem for which a quantum speed-up is achievable with a faulty oracle?
\end{openproblem}

Theorem~\ref{thm: Informal Main Theorem} resolves this problem in the affirmative when combined with the following search algorithms: super-polynomial speedup for the welded tree problem which involves finding a special vertex in a certain graph (\cite{childs2003exponential, li2024recovering, jeffery2023multidimensional, belovs2024global})
and \textit{quartic} speedup for searching counterfeit coins given a bunch of coins (\cite{iwama2012quantum}).

We conclude with some ideas that went into the proof of Theorem~\ref{thm: Informal Main Theorem}.

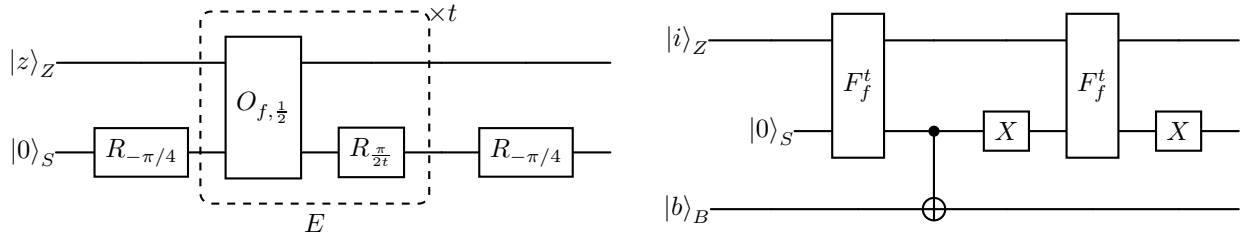
\begin{figure}[t]

\begin{minipage}[c]{0.35\linewidth}
    % \centering
        % \begin{quantikz}
        % % \ket{z}_Z&\gate[2]{F_f^t}& \\
        % \ket{0}_S&&
        % \end{quantikz}=
        \begin{quantikz}
        \ket{z}_Z&
        &\gate[2]{O_{f,\frac{1}{2}}}\gategroup[2,steps=2,style={rounded corners,dashed,inner sep=6pt, 
        label = below:$E$
        },label style={label
position=above right,anchor=north,xshift=0.40cm,yshift=0cm}]{$\times t$} &  & & & \\
        \ket{0}_S & \gate{R_{-\pi/4}} &  & \gate{R_{\frac{\pi}{2t}}} &  & \gate{R_{-\pi/4}} &
        \end{quantikz}
\end{minipage} \qquad \qquad \qquad \qquad
\begin{minipage}[c]{0.3\linewidth}
        % \begin{quantikz}
        % \ket{z}_Z&\gate[3,nwires=2]{G_f^t}& \\
        % \setwiretype{n}&&\\
        % \ket{b}_B&&
        % \end{quantikz}=
        \begin{quantikz}
        \ket{i}_Z& & \gate[2]{F_f^t} & & & \gate[2]{F_f^t} & & \\
        \setwiretype{n}&\ket{0}_S&\setwiretype{q} & \ctrl{1} & \gate{X} &  & \gate{X} &\\
        \ket{b}_B&&&\targ{}&&&&
        \end{quantikz}
\end{minipage}
\caption{The circuit to the left is $F_f^t$ and the  circuit to right is $G_f^t$. 
% \davidcom{Why is the $E$ underneath the figure?}
}
\label{fig: intro figure}
\end{figure}

% \begin{figure}[t]
%     \centering
%         \begin{quantikz}
%         \ket{z}_Z&\gate[2]{F_f^t}& \\
%         \ket{0}_S&&
%         \end{quantikz}=\begin{quantikz}
%         \ket{z}_Z&
%         &\gate[2]{O_{f,\frac{1}{2}}}\gategroup[2,steps=2,style={rounded corners,dashed,inner sep=6pt, },label style={label
% position=above right,anchor=north,xshift=0.40cm,yshift=0cm}]{$\times t$} &  & & & \\
%         \ket{0}_S & \gate{R_{-\pi/4}} &  & \gate{R_{\frac{\pi}{2t}}} &  & \gate{R_{-\pi/4}} &
%         \end{quantikz}
%     \caption{Robust calculation of $O_{f}\ket{z,0}$}
%     \label{alg:robust-0-oracle}
% \end{figure}

% \begin{figure}[t]
%     \centering
%         \begin{quantikz}
%         \ket{z}_Z&\gate[3,nwires=2]{G_f^t}& \\
%         \setwiretype{n}&&\\
%         \ket{b}_B&&
%         \end{quantikz}=\begin{quantikz}
%         \ket{z}_Z&\gategroup[3,steps=6,style={rounded corners,dashed,inner sep=6pt, }]{} & \gate[2]{F_f^t} & & & \gate[2]{F_f^t} & & \\
%         \setwiretype{n}&\ket{0}_S&\setwiretype{q} & \ctrl{1} & \gate{X} &  & \gate{X} &\\
%         \ket{b}_B&&&\targ{}&&&&
%         \end{quantikz}
%     \caption{Robust calculation of $O_{f}\ket{z,b}$}
%     \label{alg:robust-full-oracle}
% \end{figure}

\subsection{Ideas behind proof of Theorem~\ref{thm: Informal Main Theorem}}

We present our ideas for the case when $m = 1$, i.e., when $f$ has domain $\{0,1\}^n$ and range $\{0,1\}$. The proof for $m \geq 2$ follows from this case with a careful error-analysis (see Section~\ref{sec:multi-bit-functions}). Let $\cA$ be an algorithm that makes $q$ queries to the fault-free oracle $O_f$ and solves the problem $P$ with bounded-error. Such an algorithm uses three registers:
$Z$, which stores the index of the query; 
$B$, which stores the output of the query; and 
$T$, which is the workspace used by the algorithm. Let 
$$
\cA = U_q (O_f \otimes I_T) U_{q-1} (O_f \otimes I_T) \ldots U_1 (O_f \otimes I_T)U_0
$$
% \begin{align*}
%     \cA = U_q (O_x\otimes I_T) U_{q-1} (O_x\otimes I_T) \ldots U_1 (O_x\otimes I_T)U_0
% \end{align*}
where $U_i$'s act on $Z,B$ and $T$ while $O_f$ acts on $Z$ and $B$. The robust version of $\cA$, denoted $\cA'$, uses an additional register $S$ (called the scratch-pad) and has the form:
$$
\cA' = (U_q\otimes I_S)  (G_f^t\otimes I_T) \ldots (U_1\otimes I_S) (G_f^t\otimes I_T) (U_0 \otimes I_S)
$$
% \begin{align*}
%     \cA' = (U_q\otimes I_S)  (G_x^t\otimes I_T) \ldots (U_1\otimes I_S) (G_x^t\otimes I_T)
% \end{align*}
where 
% $U_i$'s act as identity on $S$, 
$t = O(q^2 \ln q)$ and 
% and  $G_f^t$ act as identity on the work-space $T$. 
the unitaries $G_f^t$ are constructed from $O_{f,p}$, as shown in Figure~\ref{fig: intro figure}.
With high probability over the randomness of $O_{f,p}$, $G_f^t$
satisfy the following property: for a suitable choice of $t = O(q^2 \ln q)$ and for all quantum states $\ket{\psi}_{ZBT}$ 
$$
\norm{((O_f \otimes I_{ST})-(G_f^t \otimes I_T))\ket{\psi}_{ZBT}\ket{0}_S} \leq O(1/q).
$$
By applying the triangle inequality and discarding the scratch-pad register at the end of the computation, we can show that the algorithms $\cA$ and $\cA'$ are close to each other. Thus, with a multiplicative overhead of $O(q^2 \ln q)$, we demonstrate how to make a $q$-query bounded-error quantum algorithm robust to a faulty oracle. Note that this error-reduction procedure is independent of the algorithm $\cA$ and works by making each fault-free oracle call robust to noise.

% The construction of $G_f^t$ can be broken down into two steps, the first of which is the construction of the circuit $F_f^t$, as shown in Figure~\ref{fig: intro figure}. We show that if the output register $B$ is in the state $\ket{0}$ then, with high probability, $F_f^t$ acts almost identically to $O_f$. Since $O_{f,p}$ acts as $I$ when $f(z) = 0$ for all $p$, this case is easy to analyze. When $f(z) = 1$, the action of this circuit can seen as a random walk on a line. We show that for $p = 1/2$, $t = O(q^2 \ln q)$ repetitions of $E$ achieve sufficient concentration for the underlying random walk. Thus we have taken care of the case when $B$ is in the state $\ket{0}$. 

% In the second step, extend the circuit $F_f^t$ to the circuit $G_f^t$ to take care of the case when $B$ is in a general state.
% We do this following the standard technique of using $F_f^t$ to compute the state $\ket{f(z)}$, use this for a $CNOT$ gate and then finally uncompute the $\ket{f(z)}$ using $F_f^t$ again. The way we do it, we ensure that whenever $f(z)=1$, we only ever use the circuit $F_f^t$ when the $B$ is close to $\ket{0}$, while for $f(z)=0$ the whole system acts like identity.

The construction of  $G_f^t$ can be broken down into two steps, the first of which is the construction of the circuit $F_f^t$, shown in Figure~\ref{fig: intro figure}. We show that if the output register $B$ is in the state $\ket{0}$ then, with high probability, $F_f^t$ acts almost identically to $O_f$. Since for all $p$, the faulty oracle $O_{f,p}$ acts as identity on inputs $\ket{z,b}$ where $z\in\{0,1\}^n$ and $f(z) = 0$, this case is easy to analyze. When $f(z) = 1$, the action of this circuit can be seen as a random walk on a line. We show that for $p = 1/2$, $t = O(q^2 \ln q)$ repetitions of $E$ achieve sufficient concentration for the underlying random walk. This completes the analysis for the case where the $B$-register is in the state $\ket{0}$.

In the second step, we extend the circuit $F_f^t$ to construct $G_f^t$, addressing the case when the $B$-regiter is in a general state. We follow a standard technique: first, we use $F_f^t$ to compute the state $\ket{f(z)}$; then we apply a $\mathrm{CNOT}$ gate; and finally, we uncompute $\ket{f(z)}$ using $F_f^t$ again. This approach ensures that whenever $f(z) = 1$, the circuit $F_f^t$ is used only when the state in the $B$-register is close to $\ket{0}$, while for $f(z) = 0$, the circuit acts as  identity.

% The construction of $G_f^t$ can be broken down into two steps. Consider the circuits shown in Figure~\ref{fig: intro figure}. We show that if the output register $B$ is in the state $\ket{0}$ then, with high probability, $F_f^t$ acts almost identically to $O_f$. Since $O_{f,p}$ acts as $I$ when $f(z) = 0$ for all $p$, this case is easy to analyze. When $f(z) = 1$, the action of this circuit can seen as a random walk on a line. We show that for $p = 1/2$, $t = O(q^2 \ln q)$ repetitions of $E$ achieve sufficient concentration for the underlying random walk. Thus we have taken care of the case when $B$ is in the state $\ket{0}$. 
% We remove this restriction in the next step where we then extend the circuit $F_f^t$ to the circuit $G_f^t$, where we follow the standard technique of using $F_f^t$ to compute the state $\ket{f(z)}$, use this for a $CNOT$ gate and then finally uncompute the $\ket{f(z)}$ using $F_f^t$ again. The way we do it, we ensure that whenever $f(z)=1$, we only ever use the circuit $F_f^t$ when the $B$ is close to $\ket{0}$, while for $f(z)=0$ the whole system acts like identity.

\subsection*{Organization}

In \cref{sec:boolean-functions}, we construct the circuit $F_f^t$, and in \cref{sec:full-oracle-replacement}, we extend this circuit to $G_f^t$. In \cref{sec:multi-bit-functions}, we discuss how to further extend the circuit to handle functions of the form $f : \{0,1\}^n\to \{0,1\}^m$, where $m \geq 2$. Finally, in \cref{sec:any-algorithm}, we provide a formal proof of Theorem~\ref{thm: Informal Main Theorem}.

\section{Preliminaries}
\label{sec: prelims}

% Notations used:
% \begin{itemize}
%     % \item $\ln$ denotes th
%     % \item Different notation in intro and main body
%     \item Mention which matrix norm we are using. For vectors we are using L2 distance? \ref{obs:replace-angle-difference}
% \end{itemize}

%- 31124
% For quantum states and operators, we will use subscripts to denote the space that they live on, that is the state $\ket{\psi}_B\in B$ unless it is clear from context which space they belong to.

We denote registers with capital letters, such as $B$, and the Hilbert space corresponding to each register as $\mathcal{H}_B$.
When this is not be clear from the context, we use subscripts to indicate the space that a state belongs to or an operator acts on; for example, $\ket{\psi}_B\in \mathcal{H}_B$. We refer the reader to~\cite{buhrman2002complexity, R03} for background on quantum query complexity.

For $f: \{0,1\}^n\to \{0,1\}^m$, we assume we have access to an underlying oracle $O_f$ that acts as
$$O_f\ket{z}\ket{b} = \ket{z}\ket{b\oplus f(z)}$$
for all $z \in \{0,1\}^n$ and $b \in \{0,1\}^m$, where $\oplus$ denotes the bitwise XOR. 
% Typically, we will imply the that $\ket{z}_Z\ket{b}_B\in Z \otimes B$, we refer to $Z$ as the input space and $B$ as the target space. 
Typically, we assume that $\ket{z}_Z\ket{b}_B\in \mathcal{H}_Z \otimes \mathcal{H}_B$, where 
$Z$ is referred to as the \textit{input} register and 
$B$ as the \textit{target} register. 
Finally, we use register $S$ as an additional \textit{scratch-pad} register, which is discarded at the end of the computation.
% We are going to initially assume that $m=1$, which we will then later generalize to arbitrary $m$ values. 
% Finally, we are also going to be using the space $S$ as a temporary scratch-pad register.

% Such an algorithm uses three registers: $Z$ which stores the index to the query, $B$ which stores the output of the query, and $T$ which is the work-space used by the algorithm.

In the \textit{faulty oracle} model,
instead of $O_f$, we are given oracle access to the channel defined as follows 
$$O_{f,p}(\cdot) = \begin{cases}
    O_f(\cdot) & \text{ With probability $(1-p)$}\\
    I(\cdot) & \text{ With probability $p$},
\end{cases}$$
where $p$ is the error-rate and $I$ is the identity map, i.e., $I(\rho) = \rho$ for all matrices $\rho$.
Essentially, with probability $p$,  this oracle acts as the identity on its input state, while with probability $(1-p)$, the oracle acts as $O_f$.

% Instead of the oracle $O_f$, we are given access to the oracle defined by the channel which we call the \textit{faulty oracle} model
% $$O_{f,p} = \begin{cases}
%     O_f & \text{ With probability $1-p$}\\
%     I & \text{ With probability $p$}
% \end{cases}$$
% Essentially, with probability $p$, the error-rate, this oracle acts with identity on its input state, while with probability $1-p$, the oracle acts exactly like the underlying $O_f$. 

% We are going to create a process, where we using a number of calls to the oracle $O_{f,p}$ can implement a robust call to the oracle $O_f$ that succeeds with constant probability and constant error in the distance. \manaswicom{The last line possibly does not belong in prelims? Same for the next para. Perhaps move both of them to Section 5?}

From a formal point of view, we think of $O_{f,p}$ as a random variable that takes one of two (unitary matrix) values: $O_f$ or $I$, where $I$ is the identity {matrix}. This is also how we think of the circuits 
% that we are going to create 
involving $O_{f,p}$ and we formulate our theorems and carry out the proofs over the randomness of this process. This has the  advantage that we can continue treating $O_{f,p}$ as a unitary, even though the faulty oracle entails a random irreversible process. 
% We will also be proving our statements based on the randomness of this process.
% It is also going to be over the randomness of this process we are going to be proving our statements.

% From a formal point of view, we are going to regard $O_{f,p}$ as a random variable that takes one of two values, $O_f$ and $I$. This is also how the circuits that we are going to create that depend on $O_{f,p}$ are going to be formulated. This has the significant advantage that we can continue treating $O_{f,p}$ as a unitary, even though it contains a random irreversible process. It is also going to be over the randomness of this process we are going to be proving our statements.

We use a geometric approach to prove the correctness of our constructions. For this, we need the following definition.
% \textit{} function. 
\begin{definition}[Absolute Angle Difference]
    For $\ket{z,\phi},\ket{z,\psi} \in \mathcal{H}_Z\otimes \mathcal{H}_B$ where $\ket{\psi}$ and $\ket{\phi}$ have real coefficients and $\ket{z}$ is a basis vector from the computational basis, we define the absolute angle difference as
    $$\Phi(\ket{z,\phi},\ket{z,\psi}_B) = |\cos^{-1}\braket{\phi}{\psi}|\in[0,\pi].$$
    %\color{red}{LM: Need to indicate range. Please check that the range is the intended one.}
    %\davidcom{Looks correct I think}
\end{definition}
% The purpose of the angle is in the end going to be that we can use it to get an upper bound on the distance between two vectors as follows:
We have the following upper bound on the distance between two vectors in terms of absolute angle difference.
\begin{observation}\label{obs:replace-angle-difference}
     For states $\ket{z,\phi},\ket{z,\psi} \in \mathcal{H}_Z\otimes \mathcal{H}_B$ where $\ket{\psi}$ and $\ket{\phi}$ have real coefficients and $\ket{z}$ is a basis vector from the computational basis,
     $$\norm{\ket{z,\phi}-\ket{z,\psi}} \le \Phi(\ket{z,\phi},\ket{z,\psi})$$
     where the norm is the $\ell^2$ norm.
\end{observation}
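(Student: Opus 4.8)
The plan is to reduce the statement to a purely two-dimensional fact about real unit vectors and then invoke the elementary inequality between a chord and the arc it subtends. Since $\ket{z}$ is a computational basis vector it has unit norm and factors out of the tensor product, so that $\ket{z,\phi}-\ket{z,\psi} = \ket{z}\otimes(\ket{\phi}-\ket{\psi})$ and hence $\norm{\ket{z,\phi}-\ket{z,\psi}} = \norm{\ket{z}}\cdot\norm{\ket{\phi}-\ket{\psi}} = \norm{\ket{\phi}-\ket{\psi}}$. Thus it suffices to bound $\norm{\ket{\phi}-\ket{\psi}}$ by $\Phi(\ket{z,\phi},\ket{z,\psi}) = |\cos^{-1}\braket{\phi}{\psi}|$.

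Next I would compute the distance in terms of the angle. Write $\theta = \cos^{-1}\braket{\phi}{\psi} \in [0,\pi]$. Because $\ket{\phi}$ and $\ket{\psi}$ have real coefficients and are unit vectors, $\braket{\phi}{\psi}$ is real and equal to $\cos\theta$, so expanding the squared norm gives
$$\norm{\ket{\phi}-\ket{\psi}}^2 = \braket{\phi}{\phi} + \braket{\psi}{\psi} - 2\braket{\phi}{\psi} = 2 - 2\cos\theta = 4\sin^2(\theta/2).$$
Since $\theta/2 \in [0,\pi/2]$ we have $\sin(\theta/2)\ge 0$, and therefore $\norm{\ket{\phi}-\ket{\psi}} = 2\sin(\theta/2)$.

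It then remains only to show $2\sin(\theta/2) \le \theta$, which is exactly the inequality $\sin x \le x$ for $x = \theta/2 \ge 0$. This is the standard chord-versus-arc bound; it follows, for instance, from observing that $x - \sin x$ vanishes at $x=0$ and has nonnegative derivative $1-\cos x$ on $[0,\infty)$, hence is nonnegative there. Chaining the three steps gives $\norm{\ket{z,\phi}-\ket{z,\psi}} = 2\sin(\theta/2) \le \theta = \Phi(\ket{z,\phi},\ket{z,\psi})$, as claimed.

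I do not expect a genuine obstacle here; the argument is elementary. The only points that require care are the two hypotheses that make it go through cleanly: that $\ket{z}$ is a basis vector, so its norm is exactly one and the input register contributes no cross terms, and that the coefficients are real, so that $\braket{\phi}{\psi} = \cos\theta$ without having to insert a real part or track a complex phase. Without the reality assumption the squared-norm identity would read $2 - 2\operatorname{Re}\braket{\phi}{\psi}$, and the direct correspondence with the absolute angle difference would need to be reconsidered.
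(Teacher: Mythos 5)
Your proof is correct. The paper states this as an observation and omits the proof entirely, so there is nothing to compare against; your derivation --- factoring out the basis vector $\ket{z}$, computing $\norm{\ket{\phi}-\ket{\psi}} = 2\sin(\theta/2)$ for real unit vectors with $\braket{\phi}{\psi}=\cos\theta$, and invoking $\sin x \le x$ --- is the standard chord-versus-arc argument the authors evidently have in mind.
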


% We regard the $\Phi$ function as being undefined in all other cases. \manaswicom{Do you mean when $Z$ is not in standard basis? I think we should either write this or define $\Phi$ more generally, saying that it will only be applied to the desired states in the paper.} The intention here is to give us access to a measure that works well with the type of circuit that we are going to construct, which is going to heavily rely on rotations of this space. 

Restricting ourselves to only a two-dimensional real space spanned by the vectors $\ket{z}\ket{0}$ and $\ket{z}\ket{1}$, we have the following observation about the actions of the faulty oracle in this space.

%\lauracom{I rephrased the statement of the following Observation. Need to say somewhere why it is not a problem that $\ket{\phi}\in\mathbb{R}^2$.}
\begin{observation}\label{obs:oracle-action}
    %In the plane $\mathbb{R}^2$ spanned by $\ket{z,0}$ and $\ket{z,1}$,
    For any $\ket{\phi}\in\mathbb{R}^2$ and $z\in\{0,1\}^n$, $O_{f,p}$ acts on $\ket{z,\phi}$ like:
    \begin{itemize}
        \item If $f(z) = 0$: Identity
        \item If $f(z) = 1$: A reflection in the vector $\ket{z,+}$ with probability $1-p$ and identity with probability $p$.
    \end{itemize}
\end{observation}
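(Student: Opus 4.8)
The plan is to treat the two possible realizations of $O_{f,p}$—namely $O_f$, occurring with probability $1-p$, and $I$, occurring with probability $p$—as separate unitaries, and to describe how each acts on the two-dimensional real subspace $\mathrm{span}\{\ket{z,0},\ket{z,1}\}$, with the case split dictated by the value of $f(z)$. By the convention fixed in the preliminaries, $O_{f,p}$ is a random variable taking these two unitary values, so it suffices to characterize each branch geometrically and then read off the claimed distribution of behaviors.

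First I would dispose of the case $f(z)=0$. Here the defining relation $O_f\ket{z}\ket{b}=\ket{z}\ket{b\oplus f(z)}$ gives $O_f\ket{z}\ket{b}=\ket{z}\ket{b}$ for both $b\in\{0,1\}$, so $O_f$ restricts to the identity on this subspace. Since the other branch is literally $I$, both realizations coincide and $O_{f,p}$ acts as the identity with certainty, as claimed.

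For $f(z)=1$, the same relation yields $O_f\ket{z}\ket{0}=\ket{z}\ket{1}$ and $O_f\ket{z}\ket{1}=\ket{z}\ket{0}$, so on this subspace $O_f$ is exactly the bit-flip $X=\bigl(\begin{smallmatrix}0&1\\1&0\end{smallmatrix}\bigr)$ acting on the $B$-register. The key identity is then $X=2\ket{+}\bra{+}-I$, which one verifies directly from $\ket{+}\bra{+}=\tfrac12\bigl(\begin{smallmatrix}1&1\\1&1\end{smallmatrix}\bigr)$; equivalently, $X$ fixes $\ket{+}$ and negates $\ket{-}$, which is precisely the reflection in the line spanned by $\ket{z,+}$. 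Combining this description of the $O_f$ branch with the probability-$p$ identity branch gives exactly the two-outcome behavior in the statement.

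The computation is elementary, so there is no genuine obstacle here; the only step worth isolating is recognizing that the bit-flip $X$ is the Householder reflection $2\ket{z,+}\bra{z,+}-I$. This identification is what lets us phrase the $f(z)=1$ action geometrically as a reflection rather than as a Pauli matrix, which is the form needed for the later random-walk analysis of $F_f^t$.
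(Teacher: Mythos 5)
Your proposal is correct and is exactly the elementary verification the paper leaves implicit (the paper states this as an observation without proof): the case split on $f(z)$, the identification of $O_f$ with the bit-flip $X$ on the $B$-register when $f(z)=1$, and the identity $X = 2\ket{+}\bra{+}-I$ exhibiting it as the reflection in $\ket{z,+}$. Nothing is missing.
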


% Initially, it might seem problematic that we restrict ourselves to only considering $\ket{\phi}\in\mathbb{R}^2$. However, this is not going to be a problem, because later we will only be performing these operations on a subspace that we have full control over and that starts in the state $\ket{0}$. So by only performing rotations and reflections we do not move outside of $\mathbb{R}^2$. \manaswicom{I think this para can also be moved out of prelims.}

We will repeatedly use the rotation matrix, defined as 
$$R_{\theta} = \begin{bmatrix}
    \cos\theta & \sin\theta\\
    -\sin\theta & \cos\theta
\end{bmatrix}$$
which performs a clockwise rotation by angle $\theta$.
% \manaswicom{in the plane spanned by $\ket{z}\ket{0}$ and $\ket{z}\ket{0}$ for all $z \in \{0,1\}^n$?}.

%\lauracom{Is there a confusion with how the axis are labeled maybe? The one pointing up is the $y$-axis and it corresponds to $\ket{1} = (0,1)$ rather than $\ket{0}$, since the point with coords x and y is described by vector $(x,y)$. The matrix $R_\theta$ rotates counter clockwise in this picture. So maybe all the angles need to switch signs. This, of course, is a minor point.}

The trace distance between pure states $\ket{\phi}$ and $\ket{\psi}$ is defined as follows
\begin{align*}
    T(\ket{\phi},\ket{\psi}) = \frac{1}{2}\norm{\ketbra{\phi}-\ketbra{\psi})}_{1} = \sqrt{1-|\braket{\phi}{\psi}|^2},
\end{align*}
see~\cite{Watrous_2018} for a proof of the second equality. We have the following lemma. 

\begin{lemma}\label{lem:dist-to-trace-dist}
    For any pure states $\ket{\phi}$ and $\ket{\psi}$, we have
    $$
    % \frac{1}{2}\norm{\ketbra{\phi}-\ketbra{\psi})}_{1} = \sqrt{1-|\braket{\phi}{\psi}|^2}
    T(\ket{\phi},\ket{\psi})
    \le \norm{\ket{\phi}-\ket{\psi}}.$$
\end{lemma}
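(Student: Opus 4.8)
The plan is to prove the inequality by squaring both sides and reducing it to a manifestly true sum-of-squares inequality. Since both quantities are nonnegative, it suffices to establish $T(\ket{\phi},\ket{\psi})^2 \le \norm{\ket{\phi}-\ket{\psi}}^2$, and for this I would use the closed form $T(\ket{\phi},\ket{\psi}) = \sqrt{1 - |\braket{\phi}{\psi}|^2}$ recorded just above the statement.

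First I would set $s = \braket{\phi}{\psi} \in \mathbb{C}$ and expand the right-hand side using $\braket{\phi}{\phi} = \braket{\psi}{\psi} = 1$:
$$\norm{\ket{\phi}-\ket{\psi}}^2 = \braket{\phi}{\phi} - \braket{\phi}{\psi} - \braket{\psi}{\phi} + \braket{\psi}{\psi} = 2 - 2\,\mathrm{Re}(s).$$
On the other side, the supplied formula gives $T(\ket{\phi},\ket{\psi})^2 = 1 - |s|^2$. Hence the target inequality $T^2 \le \norm{\ket{\phi}-\ket{\psi}}^2$ is equivalent to $1 - |s|^2 \le 2 - 2\,\mathrm{Re}(s)$, i.e. to $|s|^2 - 2\,\mathrm{Re}(s) + 1 \ge 0$.

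Finally, writing $s = \mathrm{Re}(s) + i\,\mathrm{Im}(s)$ so that $|s|^2 = \mathrm{Re}(s)^2 + \mathrm{Im}(s)^2$, the last inequality becomes
$$(\mathrm{Re}(s) - 1)^2 + \mathrm{Im}(s)^2 \ge 0,$$
which holds trivially. Taking square roots, with both sides of the original inequality nonnegative, completes the argument.

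There is essentially no hard step here: the one point to handle with care is that $s$ is complex, so the cross terms in the expansion of $\norm{\ket{\phi}-\ket{\psi}}^2$ combine to $2\,\mathrm{Re}(s)$ rather than $2s$, and correspondingly the final sum of squares carries the separate $\mathrm{Im}(s)^2$ term. Geometrically, equality holds precisely when $s$ is real and nonnegative, i.e. when the relative global phase between $\ket{\phi}$ and $\ket{\psi}$ is trivial; this is consistent with the lemma later being applied to states with real coefficients, where the Euclidean and trace distances are most directly comparable.
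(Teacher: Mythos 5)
Your proof is correct and follows essentially the same route as the paper's: square both sides, expand $\norm{\ket{\phi}-\ket{\psi}}^2 = 2-2\Re\braket{\phi}{\psi}$, and reduce the claim to a manifestly nonnegative square (the paper first passes through $\Re\braket{\phi}{\psi}\le|\braket{\phi}{\psi}|$ and uses $(1-|\braket{\phi}{\psi}|)^2\ge 0$, while you complete the square in $\Re$ and $\Im$ directly, but this is the same calculation).
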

% \begin{proof}
%     Let $d=\norm{\ket{\phi}-\ket{\psi}}$. Then we have 
%     $$d = \sqrt{2-2\Re\{\braket{\phi}{\psi}\}} \ge \sqrt{2-2|\braket{\phi}{\psi}|}.$$
%     This implies
%     $$\left(1-\frac{d^2}{2}\right)^2 \le |\braket{\phi}{\psi}|^2$$
%     which gives 
%     $$\sqrt{1-\left(1-\frac{d^2}{2}\right)^2} \ge \sqrt{1-|\braket{\phi}{\psi}|^2}$$
%     simplifying the left hand side we get
%     $$\sqrt{1-\left(1-\frac{d^2}{2}\right)^2} = \sqrt{d^2 - \frac{d^4}{4}}\le d.$$
%     Therefore 
%     $$\norm{\ket{\phi}-\ket{\psi}}\ge  \sqrt{1-|\braket{\phi}{\psi}|^2} = T(\ket{\phi},\ket{\psi}).$$
% \end{proof}

\begin{proof}
    Observe that
    \begin{align*}
        \norm{\ket{\phi}-\ket{\psi}}^2
        &= 2-2\Re\{\braket{\phi}{\psi}\} \\
        &\geq 2-2|\braket{\phi}{\psi}| \\
        &\geq 2-2|\braket{\phi}{\psi}| - (1 - |\braket{\phi}{\psi}|)^2\\
        &= 1 - |\braket{\phi}{\psi}|^2 \\
        &= T(\ket{\phi},\ket{\psi})^2.
    \end{align*}
    % Let $d=\norm{\ket{\phi}-\ket{\psi}}$. Then we have 
    % $$d = \sqrt{2-2\Re\{\braket{\phi}{\psi}\}} \ge \sqrt{2-2|\braket{\phi}{\psi}|}.$$
    % This implies
    % $$\left(1-\frac{d^2}{2}\right)^2 \le |\braket{\phi}{\psi}|^2$$
    % which gives 
    % $$\sqrt{1-\left(1-\frac{d^2}{2}\right)^2} \ge \sqrt{1-|\braket{\phi}{\psi}|^2}$$
    % simplifying the left hand side we get
    % $$\sqrt{1-\left(1-\frac{d^2}{2}\right)^2} = \sqrt{d^2 - \frac{d^4}{4}}\le d.$$
    % Therefore 
    % $$\norm{\ket{\phi}-\ket{\psi}}\ge  \sqrt{1-|\braket{\phi}{\psi}|^2} = T(\ket{\phi},\ket{\psi}).$$
\end{proof}
    
\section{Overcoming the Faulty Oracle Noise {for Boolean functions}}\label{sec:boolean-functions}
% Describe model (Copy from before)

In this section, we assume that $f$ is a {Boolean} function, that is, $f : \{0,1\}^n \to \{0,1\}$. In~\cref{sec:multi-bit-functions}, we generalize the results of this section to functions of the form $f : \{0,1\}^n \to \{0,1\}^m$ where $m \geq 2$. We will furthermore use $z\in\{0,1\}^{n}$ to denote a bitstring as a possible input to the function $f$.

We now discuss a central component of our algorithm. For the purpose of this algorithm, we assume $p=1/2$. It will soon turn out to give some nice symmetry properties that will be important for the proof of correctness. Note that if $p<1/2$ and $p$ is known, then we can increase the probability of error to $1/2$ by simply not applying the oracle with probability %$(1/2 - p)$.
$1-\tfrac{1}{2(1-p)}$.

% In this section, we are going to assume that $f$ is a {Boolean} function, that is $m=1$. In~\cref{sec:multi-bit-functions}, we generalize this to general functions.

The core part of our algorithm is shown in two figures: \cref{alg:robust-0-oracle} gives the circuit for $F_f^t$, and \cref{alg:robust-full-oracle} gives the circuit for $G_f^t$. 
It easy to see that when $f(z) = 0$ then both $F_f^t$ and $G_f^t$ act as identity. 
Thus, the primary goal will be to show that $((I\otimes R_{\frac{\pi}{2t}})O_{f,\frac{1}{2}})^t$ acts very close to identity on $\ket{z,+}$, with high probability, when $f(z) = 1$.

Intuitively, the circuit $F_f^t$ does the following steps. First, it rotates the target 
% (or oracle) 
register by $-\pi/4$ which results in rotating $\ket{z,0}$ to $\ket{z,+}$. Then it repeatedly reflects the input in $\ket{z,+}$ using the oracle $O_{f,\frac{1}{2}}$ and rotates the state by the angle $\frac{\pi}{2t}$. Finally, it rotates the state by $-\pi/4$. 
% It should be immediate that when $f(z) = 0$ then the whole transformation acts with identity on the target register. 
% The primary goal will be to show that $((I\otimes R_{\frac{\pi}{2t}})O_{f,\frac{1}{2}})^t$ acts very close to identity on $\ket{z,+}$, with high probability, when $f(z) = 1$.

Since we are working with a random process, it will be useful to succinctly describe different possible outcomes of this process. 
We will think of the circuits involving the faulty oracle as outputting various \emph{pure} states each with some probability.
%\lauracom{Check that you are happy with the last sentence.}
%As a note, we are going to write and describe the results of the computation as a pure state, by regarding the action that the oracle takes as a random variable.

% \manaswicom{Mention that this is a fair assumption since $O_{f,p}$ takes pure states to pure states? Makes sense! Should we somehow justify why this would make sense? Or should we say probability distribution over pure state? lets discuss this}
% \davidcom{It probably requires a bit more than this. $O_{f,p}$ doesn't actually take a pure state to a pure state, but instead to a mixed state. It takes a pure state to a probability distribution over pure states, which I instead model as $O_{f,p}$ being a unitary in the form of a random variable}

%\lauracom{Say that $z\in\{0,1\}^n$ is such that $f(z)=1$ and $\ket{\psi}$ is a ?real? quantum state.}
Informally, we let $\ket{z,\psi}$ denote the state of the computation just before the current step, where $f(z) = 1$ and $\ket{\psi} = \alpha\ket{0}+\beta\ket{1}$ for some real $\alpha,\beta$. We use the following notation: 
%_%
\begin{itemize}
    \item $rf$: Oracle fails while $\ket{z,\psi}$ is on the right of $\ket{z,+}$ 
    \item $rs$: Oracle succeeds while $\ket{z,\psi}$ is on the right of $\ket{z,+}$ 
    \item $lf$: Oracle fails while $\ket{z,\psi}$ is on the left of $\ket{z,+}$ 
    \item $ls$: Oracle succeeds while $\ket{z,\psi}$ is on the left of $\ket{z,+}$ 
\end{itemize}

\begin{figure}
    \centering
        \begin{quantikz}
        \ket{z}_Z&\gate[2]{F_f^t}& \\
        \ket{0}_S&&
        \end{quantikz}=\begin{quantikz}
        \ket{z}_Z&
        &\gate[2]{O_{f,\frac{1}{2}}}\gategroup[2,steps=2,style={rounded corners,dashed,inner sep=6pt, },label style={label
position=above right,anchor=north,xshift=0.40cm,yshift=0cm}]{$\times t$} &  & & & \\
        \ket{0}_S & \gate{R_{-\pi/4}} &  & \gate{R_{\frac{\pi}{2t}}} &  & \gate{R_{-\pi/4}} &
        \end{quantikz}
    \caption{Robust calculation of $O_{f}\ket{z,0}$}
    \label{alg:robust-0-oracle}
\end{figure}

\begin{figure}
    \centering
        \begin{quantikz}
        \ket{z}_Z&\gate[3,nwires=2]{G_f^t}& \\
        \setwiretype{n}&&\\
        \ket{b}_B&&
        \end{quantikz}=\begin{quantikz}
        \ket{z}_Z& & \gate[2]{F_f^t} & & & \gate[2]{F_f^t} & & \\
        \setwiretype{n}&\ket{0}_S&\setwiretype{q} & \ctrl{1} & \gate{X} &  & \gate{X} &\\
        \ket{b}_B&&&\targ{}&&&&
        \end{quantikz}
    \caption{Robust calculation of $O_{f}\ket{z,b}$. %\lauracom{Why do we need the dotted box in this pic? Omit it?}
    }
    \label{alg:robust-full-oracle}
\end{figure}

We further define that $\ket{z,+}$ is on the right of $\ket{z,+}$ to simplify proofs and not have to care about equivalence.

\begin{definition}
    We use a sequence $a=a_1,\ldots,a_t$ where $a_i\in \{rf,rs,lf,ls\}$ to describe the success/failure sequence in a fixed execution of the random process described in \cref{alg:robust-0-oracle}. We call such a sequence valid if it can arise from some computation. %\lauracom{see if you are happy with the def.}
\end{definition}
It should be noted that some sequences in $\{rf,rs,lf,ls\}^t$ do not describe a valid computation. For example, every computation must start with either $rs$ of $rf$ since the initial state is $\ket{z,+}$ which is defined to be on the right.
\begin{definition}
    For a sequence describing a computation $a=a_1,\ldots,a_t$, define $c_i=1$ if $a_i\in \{rs,ls\}$ and $c_i=0$ otherwise. Let $U(a)$ be the unitary:
    $$U(a) = (I\otimes R_{\frac{\pi}{2t}})(O_{f})^{c_t}\ldots(I\otimes R_{\frac{\pi}{2t}})(O_{f})^{c_1}.$$
    This is the unitary that is applied after $t$ repetitions of $(I\otimes R_{\frac{\pi}{2t}})O_{f,\frac{1}{2}}$
\end{definition}
%\lauracom{The definition of $U(a)$ could be made more precise. In particular, the notation used in the superscript could be misunderstood.}
%\davidcom{I am not sure what you mean. I want it to be a power, since $(O_f)^0=I$ }
%\lauracom{What you mean is that the power is 1, if $a_t\in\{rs,ls\}$ and 0 otherwise. Writing ``$a_t\in\{rs,ls\}$'' doesn't automatically suggest this meaning. Maybe define the powers explicitly as $n_i=\dotsc$? Not sure I see a more succinct yet unambiguous solution.}
%\davidcom{I see. The intention was to use the notation from indicator variables, where you often write $[P]$ which means that it is one if $P$ is true and $0$ if $P$ is false. But i Guess this notation can be a bit ambigous}

%We will now look at what happens in the four cases with respect to the angle relative to $\ket{+}$
%_%
%after each application of $(I\otimes R_{\frac{\pi}{2t}})O_{f,\frac{1}{2}}$.

%\manaswicom{to discuss: defn of unwanted flip. Whether $a_i$ is $ls$ or $rs$, $U(a_1, \dots, a_i)$ is the same, right? Same for $a_i$ being $rf$ or $lf$. If so, then whenever the angle is smaller than $\beta$ we have an unwanted flip.}

Ignoring the case when the angle difference is $0$, we make a simple observation, which follows easily from considering the action of the different reflections and rotations (see  \cref{obs:oracle-action}). %\lauracom{move the consequence after the observation?}
\begin{figure}
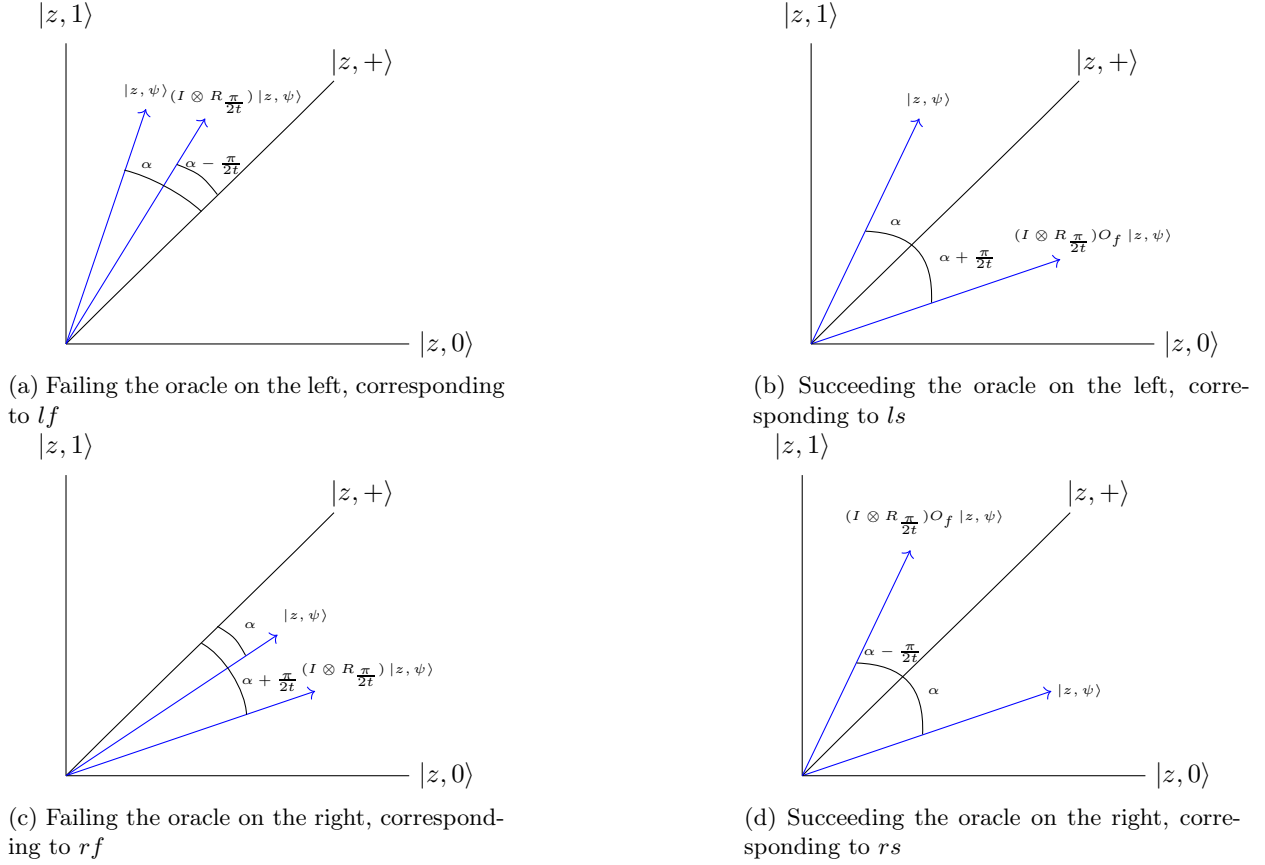

\centering
\begin{subfigure}{0.4\textwidth}
    \ctikzfig{angle-action-lf}
    \caption{Failing the oracle on the left, corresponding to $lf$}
    \label{fig:lf}
\end{subfigure}
\hfill
\begin{subfigure}{0.4\textwidth}
    \ctikzfig{angle-action-ls}
    \caption{Succeeding the oracle on the left, corresponding to $ls$}
    \label{fig:ls}
\end{subfigure}
\hfill
\begin{subfigure}{0.4\textwidth}
    \ctikzfig{angle-action-rf}
    \caption{Failing the oracle on the right, corresponding to $rf$}
    \label{fig:rf}
\end{subfigure}
\hfill
\begin{subfigure}{0.4\textwidth}
    \ctikzfig{angle-action-rs}
    \caption{Succeeding the oracle on the right, corresponding to $rs$}
    \label{fig:rs}
\end{subfigure}
\hfill
\caption{The $4$ possible actions that the oracle $O_{f,p}$ can end up making when $f(z)=1$.}
\label{fig:actions}
\end{figure}

\begin{observation}\label{obs:sequence-actions}
    Let $a=a_1,\ldots,a_t$ be the sequence describing the computation, and let $\ket{z}$ be a computational basis vector where $f(z)=1$. If $\Phi(U(a_1,\ldots,a_{i})\ket{z,+},\ket{z,+})\neq 0$ then
    \begin{itemize}
        \item if $a_i = rf$ then $\Phi(U(a_1,\ldots,a_{i})\ket{z,+},\ket{z,+}) = \Phi(U(a_1,\ldots,a_{i-1})\ket{z,+},\ket{z,+})+\frac{\pi}{2t}$
        \item if $a_i = rs$ then $\Phi(U(a_1,\ldots,a_{i})\ket{z,+},\ket{z,+}) = \Phi(U(a_1,\ldots,a_{i-1})\ket{z,+},\ket{z,+})-\frac{\pi}{2t}$
        \item if $a_i = lf$ then $\Phi(U(a_1,\ldots,a_{i})\ket{z,+},\ket{z,+}) = \Phi(U(a_1,\ldots,a_{i-1})\ket{z,+},\ket{z,+})-\frac{\pi}{2t}$
        \item if $a_i = ls$ then $\Phi(U(a_1,\ldots,a_{i})\ket{z,+},\ket{z,+}) = \Phi(U(a_1,\ldots,a_{i-1})\ket{z,+},\ket{z,+})+\frac{\pi}{2t}$
    \end{itemize}
\end{observation}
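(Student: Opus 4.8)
The plan is to prove the claim by a direct, per-step geometric calculation inside the two-dimensional real plane $\mathrm{span}\{\ket{z,0},\ket{z,1}\}$, measuring everything against the fixed axis $\ket{z,+}$. I would track the \emph{signed} angle of the running state $U(a_1,\ldots,a_i)\ket{z,+}$ relative to $\ket{z,+}$, adopting the convention (consistent with the paper's stipulation that $\ket{z,+}$, at signed angle $0$, lies on the right, and with the fact that $R_\theta$ rotates clockwise) that the clockwise side is the \emph{right} and the counterclockwise side is the \emph{left}. The quantity $\Phi(U(a_1,\ldots,a_i)\ket{z,+},\ket{z,+})$ is then exactly the magnitude of this signed angle, and the $r/l$ label of $a_i$ records which side the state occupied immediately before the oracle was applied at step $i$.

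First I would isolate the two elementary moves out of which each step is assembled. Since $f(z)=1$, Observation~\ref{obs:oracle-action} tells us that an oracle failure acts as the identity while an oracle success acts as the reflection in $\ket{z,+}$; reflecting in $\ket{z,+}$ negates the signed angle, so it preserves the absolute angle and swaps the side (left $\leftrightarrow$ right). The subsequent rotation $R_{\pi/2t}$ subtracts $\pi/2t$ from the signed angle, hence it increases the absolute angle by $\pi/2t$ when the state lies on the right and decreases it by $\pi/2t$ when it lies on the left.

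With these two facts the four cases follow by composition. In the failure cases the oracle is trivial, so only the rotation acts: starting on the right ($rf$) the absolute angle grows by $\pi/2t$, while starting on the left ($lf$) it shrinks by $\pi/2t$. In the success cases the oracle first reflects across $\ket{z,+}$ and the rotation then acts on the reflected state: from the right ($rs$) the reflection moves the state to the left, where $R_{\pi/2t}$ decreases the absolute angle by $\pi/2t$; from the left ($ls$) the reflection moves it to the right, where $R_{\pi/2t}$ increases the absolute angle by $\pi/2t$. Reading these four outcomes off gives precisely the stated identities.

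The one delicate point, which I expect to be the main obstacle and which is exactly what both the ``ignore angle $0$'' preamble and the hypothesis $\Phi(U(a_1,\ldots,a_i)\ket{z,+},\ket{z,+})\neq 0$ are designed to handle, is that a decreasing rotation applied to a state very close to $\ket{z,+}$ could in principle overshoot the axis and land on the opposite side, so that the absolute angle would not change by exactly $\pi/2t$. I would dispatch this using the discreteness of the process: a one-line induction shows that, starting from $\theta_0 = 0$ and changing by $\pm\pi/2t$ at each step, the absolute angle is always a nonnegative integer multiple of $\pi/2t$. Since there are no such grid points strictly between $0$ and $\pi/2t$, a decreasing step can fail to change the magnitude by exactly $\pi/2t$ only if it begins at the axis ($\theta_{i-1}=0$, the degenerate start) or lands on the axis ($\theta_i=0$); both are ruled out once we assume $\Phi(U(a_1,\ldots,a_i)\ket{z,+},\ket{z,+})\neq 0$ and treat the axis separately. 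In every remaining configuration the pre-step angle satisfies $\theta_{i-1}\geq \pi/2t$, so the clockwise rotation neither reaches nor crosses the axis, the state stays on a single side throughout the step, and the four identities hold as claimed.
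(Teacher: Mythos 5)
Your proof is correct and takes essentially the same route as the paper, which offers no written argument for this observation beyond pointing to Observation~\ref{obs:oracle-action} and the figure of the four reflection/rotation cases; your case analysis is exactly that argument written out, and your discreteness point about why a decreasing step cannot overshoot the axis addresses the same degenerate situation the paper sets aside with ``ignoring the case when the angle difference is $0$'' and later handles explicitly in Lemma~\ref{lem:random-walk-upper-bound}. Nothing further is needed.
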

An immediate consequence of this is that the angle difference between $\ket{z,+}$ and the result of the computation is always a multiple of $\frac{\pi}{2t}$, which simplifies the analysis.

It is important to note that both when we are on the left side and when we are on the right side there is an outcome where the absolute angle difference decreases by $\frac{\pi}{2t}$ and one where it increases by $\frac{\pi}{2t}$. This is the reason why we force $p=1/2$. Because on the left side, we have the decrease by $\frac{\pi}{2t}$ if the oracle fails, while on the right side, the decrease is when the oracle succeeds. Setting $p=1/2$ 
%the probability distributions of being on the left and on the right are the same, which means that we can 
allows us to
model the difference in angle using a random walk without having to care for orientation. This idea is formalized in the following lemma:

\begin{lemma}
    Let $A=A_1,\ldots,A_t$ where for each $i\in \{1,\ldots,t\}$, $A_i\in\{rf,rs,lf,ls\}$ is the random variable that describes the computation. We define $X_1,\ldots,X_t$ as $X_i = 1$ if and only if $A_i\in \{rf,ls\}$ and $-1$ otherwise. Then $\Pr[X_i=-1] = \Pr[X_i=1] = \frac{1}{2}$ for all $1 \le i \le t$, and all $X_i$ are independent.
\end{lemma}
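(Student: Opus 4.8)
The plan is to prove the two claims—that each $X_i$ is an unbiased sign and that the $X_i$ are jointly independent—by tracing exactly how the faulty oracle $O_{f,\frac12}$ produces each outcome. The key structural fact is that the labels $\{rf,rs,lf,ls\}$ encode two independent pieces of information: whether the oracle succeeds or fails (a coin flip of bias $1/2$, intrinsic to $O_{f,p}$ with $p=1/2$), and whether the current state lies on the left or right of $\ket{z,+}$ (a deterministic function of the history $A_1,\dots,A_{i-1}$). By definition, $X_i=+1$ exactly when $A_i\in\{rf,ls\}$, i.e.\ when (fail \& right) or (success \& left), and $X_i=-1$ when $A_i\in\{rs,lf\}$, i.e.\ (success \& right) or (fail \& left).

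First I would isolate the randomness. At step $i$, let $F_i$ be the indicator that the oracle \emph{fails} on the $i$-th call; by the definition of $O_{f,\frac12}$ as a random variable, $\Pr[F_i=1]=\Pr[F_i=0]=\tfrac12$, and the $F_1,\dots,F_t$ are independent, since each oracle invocation flips its own fresh coin. Next I would argue that the side (left/right) at step $i$ is completely determined by $A_1,\dots,A_{i-1}$: the state before the first call is $\ket{z,+}$, declared to be on the right, and Observation~\ref{obs:sequence-actions} shows each outcome moves the signed angle by $\pm\frac{\pi}{2t}$ in a way fixed by the label, so the sign of the angle difference after $i-1$ steps—hence the side at step $i$—is a deterministic function $s_i(A_1,\dots,A_{i-1})\in\{L,R\}$.

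The crux is then to observe that, conditioned on any fixed history $A_1=a_1,\dots,A_{i-1}=a_{i-1}$, the side $s_i$ is a constant, and $X_i$ becomes a deterministic bijection of $F_i$. Concretely, on the right side $X_i=+1\iff F_i=1$, while on the left side $X_i=+1\iff F_i=0$; in both cases $X_i$ is an invertible (parity) function of the single fresh coin $F_i$. Since $F_i$ is uniform on $\{0,1\}$ and independent of $(F_1,\dots,F_{i-1})$, and the history is a function only of $F_1,\dots,F_{i-1}$, we get $\Pr[X_i=+1\mid A_1,\dots,A_{i-1}]=\Pr[F_i=1\text{ or }F_i=0]=\tfrac12$ regardless of the conditioning. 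This conditional probability being the constant $\tfrac12$ for every history simultaneously establishes both $\Pr[X_i=\pm1]=\tfrac12$ (by averaging) and independence of $X_i$ from $X_1,\dots,X_{i-1}$; iterating over $i$ gives full joint independence.

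The main obstacle is the bookkeeping needed to justify that the side $s_i$ really is a deterministic function of the past and that the success/fail coin $F_i$ is the \emph{only} new source of randomness at step $i$—in other words, cleanly separating ``which coin'' from ``which side'' so that the map $F_i\mapsto X_i$ is seen to be a bijection under either value of $s_i$. This is precisely where the choice $p=1/2$ is essential: it makes $\Pr[F_i=1]=\Pr[F_i=0]$, so that the orientation-dependent relabeling between the left and right cases does not disturb the uniformity of $X_i$. Once this symmetry is in place, the unbiasedness and independence follow from the standard fact that an invertible function of a fresh uniform bit, conditioned on an independent past, remains uniform and independent of that past.
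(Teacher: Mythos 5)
Your proposal is correct and follows essentially the same argument as the paper: the side (left/right) at step $i$ is determined by the past, the success/failure coin is a fresh unbiased flip because $p=1/2$, and so $\Pr[X_i=1\mid A_1,\dots,A_{i-1}]=\tfrac12$ regardless of history, giving both uniformity and independence. Your version simply makes explicit (via the $F_i$ indicators and the bijection argument) what the paper's terser proof leaves implicit.
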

\begin{proof}
    %There are in general two options, either we are on the left side or the right side of $\ket{+}$. 
    Regardless of whether we are on the right or left, the probability that the oracle succeeds is the same as the probability  that it fails. So we have $\Pr[A_i=rs] = \Pr[A_i=rf]$ and $\Pr[A_i=ls] = \Pr[A_i=lf]$. Since $A_i\in\{rf,rs,lf,ls\}$ we conclude that $\Pr[A_i\in \{rs,lf\}] = \frac{1}{2}$. Then independence follows from the fact that regardless of previous outcomes, $\Pr[A_i\in \{rs,lf\}] = \frac{1}{2}$.
\end{proof}

The above lemma allows us to use a random walk to bound how much we have moved from our starting point.
\begin{lemma}\label{lem:random-walk-upper-bound}
    Let $a=a_1,\ldots,a_t \in\{rf,rs,lf,ls\}^t$ be a fixed sequence that describes the computation, and let $\ket{z}$ be a standard basis vector where $f(z)=1$. Let $X_1,\ldots,X_t$ be defined such that $X_i = 1$ if and only if $a_i\in \{rf,ls\}$ and $-1$ otherwise. Then 
    $$\Phi(U(a)\ket{z,+},\ket{z,+})= \frac{\pi}{2t}\left|\sum_{i=1}^{t}X_i\right|$$
\end{lemma}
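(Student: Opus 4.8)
The plan is to prove Lemma~\ref{lem:random-walk-upper-bound} by induction on $t$, tracking the \emph{signed} angle difference rather than only its absolute value, so that the four cases of Observation~\ref{obs:sequence-actions} collapse into a single symmetric statement. The key conceptual point, already hinted at in the discussion preceding the lemma, is that the sign of the change in the absolute angle difference $\Phi$ depends on which side of $\ket{z,+}$ the current state lies on. The variable $X_i$ is defined precisely to absorb this orientation dependence: on the right, the failure outcome $rf$ increases $\Phi$ and success $rs$ decreases it, whereas on the left the roles swap ($lf$ decreases, $ls$ increases). Setting $X_i = +1$ for $\{rf, ls\}$ and $X_i = -1$ for $\{rs, lf\}$ is exactly the relabeling that makes each step add $\tfrac{\pi}{2t} X_i$ to a \emph{signed} angular coordinate, independent of side.

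Concretely, I would introduce a signed angle $\theta_i$ measuring the (real, oriented) angular displacement of $U(a_1,\ldots,a_i)\ket{z,+}$ from $\ket{z,+}$, with the convention that positive $\theta_i$ means the state is on the right and negative means on the left (matching the paper's convention that $\ket{z,+}$ itself counts as being on the right). The claim to establish inductively is that $\theta_i = \tfrac{\pi}{2t}\sum_{j=1}^{i} X_j$, and that $\Phi(U(a_1,\ldots,a_i)\ket{z,+}, \ket{z,+}) = |\theta_i|$. The base case is immediate since $\theta_0 = 0$ and the initial state is $\ket{z,+}$. For the inductive step I would case-split on the side of the current state and on whether the oracle succeeds or fails, reading off the sign of the increment from Observation~\ref{obs:sequence-actions}: in each of the four cases one checks that the signed coordinate changes by exactly $\tfrac{\pi}{2t} X_i$ with the $X_i$ value dictated by the definition. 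Since $\Phi = |\theta_t|$ and $\theta_t = \tfrac{\pi}{2t}\sum_{i=1}^{t} X_i$, taking absolute values yields the stated formula
$$
\Phi(U(a)\ket{z,+}, \ket{z,+}) = \frac{\pi}{2t}\left|\sum_{i=1}^{t} X_i\right|.
$$

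The main obstacle I anticipate is the bookkeeping around the sign convention and the boundary case $\Phi = 0$, which Observation~\ref{obs:sequence-actions} explicitly excludes. When the signed angle passes through zero (the state crosses from one side of $\ket{z,+}$ to the other), I must verify that the signed-coordinate description remains consistent: a reflection in $\ket{z,+}$ of a state at angle $\theta$ produces the state at angle $-\theta$, and the subsequent rotation by $\tfrac{\pi}{2t}$ then shifts this correctly. The cleanest way to handle the crossing is to work entirely with the signed coordinate $\theta$, where the reflection-plus-rotation action of each step is a genuinely affine update $\theta \mapsto \pm\theta \pm \tfrac{\pi}{2t}$ whose net effect on the \emph{signed} coordinate is always $+\tfrac{\pi}{2t}X_i$ regardless of side; this makes the zero-crossing a non-issue because the signed formula never needs to branch on the boundary. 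I would then only invoke Observation~\ref{obs:sequence-actions} as a sanity check against the signed computation, and recover the absolute-value statement at the very end. The remaining steps — relating the real two-dimensional geometry of reflections in $\ket{z,+}$ to angular coordinates — are routine trigonometry that I would not expand in detail.
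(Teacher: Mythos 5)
Your plan hinges on the claim that each step changes the \emph{signed} angular coordinate by exactly $+\tfrac{\pi}{2t}X_i$ ``regardless of side,'' and that claim is false. Fix the orientation so that the rotation $R_{\frac{\pi}{2t}}$ always adds $+\tfrac{\pi}{2t}$ to the signed angle $\theta$ measured from $\ket{z,+}$. A failed query is then the translation $\theta \mapsto \theta + \tfrac{\pi}{2t}$ on \emph{both} sides, so for $a_i = lf$ the signed coordinate moves by $+\tfrac{\pi}{2t}$ even though $X_i = -1$; and a successful query is $\theta \mapsto -\theta + \tfrac{\pi}{2t}$, a net change of $\tfrac{\pi}{2t} - 2\theta$, which is not $\pm\tfrac{\pi}{2t}$ unless $\theta \in \{0, \tfrac{\pi}{2t}\}$. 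Reflections are not translations: only the \emph{absolute} angle moves by $\pm\tfrac{\pi}{2t}$ per step (\cref{obs:sequence-actions}), so there is no signed coordinate satisfying $\theta_i = \tfrac{\pi}{2t}\sum_{j\le i}X_j$ with $\Phi = |\theta_i|$, and your induction does not close.

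Moreover, the zero crossing you propose to dismiss is precisely where the identity breaks, and no bookkeeping repairs it, because the stated pathwise equality itself fails there. Take $t=2$ and the valid sequence $a=(rs,rf)$: a success fixes $\ket{z,+}$ and the rotation sends it to $\ket{z,0}$, then a failure rotates $\ket{z,0}$ to $\ket{z,-}$, so $\Phi(U(a)\ket{z,+},\ket{z,+}) = \pi/2$, whereas $X_1=-1$, $X_2=+1$ give $\tfrac{\pi}{2t}\left|X_1+X_2\right| = 0$. The source of the trouble is that a success at $\Phi=0$ still \emph{increases} the absolute angle by $\tfrac{\pi}{2t}$ although $X_i=-1$, so $\Phi_t/\tfrac{\pi}{2t}$ is the walk \emph{reflected} at the origin rather than $\left|\sum_i X_i\right|$; these two processes agree in distribution (which is all that \cref{lem:prob-bound} and the downstream concentration bound require) but not for a fixed sequence $a$. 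The paper's own proof glosses over the same point (it takes absolute values of the one-step recursion and asserts the increments telescope), so you have in effect reproduced its gap rather than filled it. A correct route is either to prove the distributional version of the statement directly, or to use the exact signed unrolling $\theta_t = \tfrac{\pi}{2t}\sum_{r=1}^{t}\prod_{j>r}\sigma_j$ with $\sigma_j=-1$ on success and $+1$ on failure, whose summands (apart from the last) are again independent uniform signs and hence still admit the Chernoff bound.
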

\begin{proof}
    First we observe that for all $r$, $\Phi(U(a_1,\ldots,a_{r})\ket{z,+},\ket{z,+})$ takes values that are integer multiples of $\frac{\pi}{2t}$ by \cref{obs:sequence-actions}.
    
    Now whenever we have that $\Phi(U(a_1,\ldots,a_{r})\ket{z,+},\ket{z,+})\neq 0$, by \cref{obs:sequence-actions} we have that:
    \begin{equation}\label{eq:before-flip}
        \Phi(U(a_1,\ldots,a_{r})\ket{z,+},\ket{z,+}) = \Phi(U(a_1,\ldots,a_{r-1})\ket{z,+},\ket{z,+})+X_l\frac{\pi}{2t}.
    \end{equation}
    We see that this is the same contribution that $X_r$ has to the sum of $\frac{\pi}{2t}\left|\sum_{i=1}^{t}X_i\right|$.
    
    Next, observe that if $\Phi(U(a_1,\ldots,a_{r-1})\ket{z,+},\ket{z,+}) = 0$ then in the next step the angle always increases and  we thus have:
    \begin{equation}\label{eq:after-flip}
        \left|\Phi(U(a_1,\ldots,a_{r})\ket{z,+},\ket{z,+})\right| = \left|\Phi(U(a_1,\ldots,a_{r-1})\ket{z,+},\ket{z,+})+X_{r}\frac{\pi}{2t}\right|.
    \end{equation}
    Therefore by taking the absolute value of \cref{eq:before-flip} and combining it with \cref{eq:after-flip} we get
    $$\Phi(U(a)\ket{z,+},\ket{+}) = \frac{\pi}{2t}\left|\sum_{i=1}^{t}X_i\right|.$$
%\lauracom{Should it be equality in the lemma statement or inequality in the last equation? I actually don't follow the last argument. Seems like some details are missing or is it wrong?? It seems that the $X_i$'s where angle is 0 are always contribute +1. So how can the lemma statement be correct? }
%\davidcom{The argument would be that whenever we hit $X_1+...+X_i = 0$ then the sum until then is correct. Then the addition is correct from the fact that the action increases it by one which is the same that happens with $X_{i+1}$ contribution to $\left|\sum_{i=1}^{i+1} X_i\right|$. We might need to do this inductively?}
\end{proof}
%\davidcom{A random mathoverflow answer on bounding maximums of random walks mentioned it.}

From this point onward we are going to focus on using this random walk representation of the absolute difference to construct a bound on the concentration of the distance.
%\begin{lemma}[Lévy's inequality\cite{levy1937}]\label{lem:levy}
%    Let $X_1,\ldots,X_t$ be identically symmetrically distributed independent random variables. Then
%    $$\Pr[\max_r \left|\sum_{i=r}^t X_t\right|\ge x] \le 2\Pr[\left|\sum_{i=1}^t X_t\right|\ge x].$$
%\end{lemma}
\begin{lemma}\label{lem:prob-bound}
    Let $X_1,\ldots,X_t$ be random variables that take values independently and uniformly at random from the set $\{-1,1\}$. Then with probability at least $1-\delta$:
    $$\left|\sum_{i=1}^{t}X_i\right| < \sqrt{6t\ln (2\delta^{-1})}.$$
\end{lemma}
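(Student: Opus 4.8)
The plan is to treat $S=\sum_{i=1}^{t}X_i$ as a symmetric simple random walk and apply a standard Chernoff-type tail bound, then verify that the stated threshold $\sqrt{6t\ln(2\delta^{-1})}$ is comfortably large enough. Since each $X_i$ is a $\{-1,1\}$-valued variable with $\mathbb{E}[X_i]=0$, I would first control the moment generating function of a single step: for any $\lambda\in\mathbb{R}$, $\mathbb{E}[e^{\lambda X_i}]=\cosh\lambda\le e^{\lambda^2/2}$. Using independence of the $X_i$, this multiplies to $\mathbb{E}[e^{\lambda S}]=\prod_{i=1}^{t}\mathbb{E}[e^{\lambda X_i}]\le e^{t\lambda^2/2}$.

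Next I would apply Markov's inequality to the nonnegative random variable $e^{\lambda S}$, obtaining for every $\lambda>0$ the one-sided estimate $\Pr[S\ge s]\le e^{-\lambda s}\,\mathbb{E}[e^{\lambda S}]\le e^{-\lambda s+t\lambda^2/2}$. Optimizing over $\lambda$ by taking $\lambda=s/t$ yields $\Pr[S\ge s]\le e^{-s^2/(2t)}$. By the symmetry of each $X_i$ about $0$, the identical bound holds for $\Pr[S\le -s]$, so a union bound gives the two-sided estimate $\Pr[|S|\ge s]\le 2e^{-s^2/(2t)}$.

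Finally I would substitute the threshold $s=\sqrt{6t\ln(2\delta^{-1})}$, so that $s^2/(2t)=3\ln(2\delta^{-1})$ and the tail bound becomes $2e^{-3\ln(2\delta^{-1})}=2(\delta/2)^3=\delta^3/4$. Because $0<\delta\le 1$, we have $\delta^3/4\le\delta$, hence $\Pr[|S|\ge s]\le\delta$, which is equivalent to the claimed $\Pr\bigl[|S|<\sqrt{6t\ln(2\delta^{-1})}\bigr]\ge 1-\delta$.

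I do not anticipate any genuine obstacle here: this is a textbook Hoeffding/Chernoff estimate for bounded independent variables, and the constant $6$ in the statement is in fact generous, since even $s^2=2t\ln(2\delta^{-1})$ would already suffice. The only points needing a little care are the two-sidedness, handled by symmetry together with a union bound, and the use of $\delta\le 1$ to absorb the cube $\delta^3/4$ into a single factor of $\delta$. Alternatively, one could simply invoke Hoeffding's inequality as a black box for variables with range $[-1,1]$, which gives $\Pr[|S|\ge s]\le 2\exp(-s^2/(2t))$ directly and leads to the same conclusion.
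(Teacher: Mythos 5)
Your proof is correct and takes essentially the same approach as the paper: both bound the tail of $\sum_i X_i$ by the exponential moment (Chernoff) method. The only cosmetic difference is that the paper shifts to Bernoulli variables $Y_i=(X_i+1)/2$ and invokes the multiplicative Chernoff bound $2\exp(-\epsilon^2\mu/3)$, landing exactly on $\delta$, whereas you derive the sub-Gaussian (Hoeffding) bound $2e^{-s^2/(2t)}$ directly from $\cosh\lambda\le e^{\lambda^2/2}$ and end up with the stronger $\delta^3/4\le\delta$ --- correctly observing that the constant $6$ is not tight.
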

\begin{proof}
    Define $Y_i = \frac{X_i+1}{2}$. Then $Y_i$ are i.i.d uniform Bernoulli distributed random variables with $E[Y_i] = \frac{1}{2}$. Then
    \begin{align*}
        \Pr[\left|\sum_{i=1}^{t}X_i\right|\ge \sqrt{6 t\ln(2\delta^{-1})}]
        &= \Pr[\left|2\sum_{i=1}^{t}Y_i - t\right|\ge \sqrt{6 t\ln(2\delta^{-1})}]\\
        &= \Pr[\left|\sum_{i=1}^{t}Y_i - \frac{t}{2}\right|\ge \sqrt{\frac{6\ln(2\delta^{-1})}{t} }\cdot \frac{1}{2}t ]\\
        &\le 2\exp(-\frac{6\ln (2\delta^{-1})}{t}\cdot \frac{1}{6} t)\\
        &= \delta
    \end{align*}
    where we have used Chernoff bound in the second inequality.
\end{proof}

We can now prove the main statement of this section, namely that when $t$ gets large enough, $((I\otimes R_{\frac{\pi}{2t}})O_{f,\frac{1}{2}})^t$ acts almost with identity on $\ket{z,+}$. 

\begin{lemma}\label{lem:reflection-rotation-identity}
    Let $t \ge \frac{\frac{3}{2}\pi^2\ln(2\delta^{-1})}{\gamma^2}$ and $\delta \le \frac{1}{5}$. Then with probability at least $1-\delta$, for any state $\ket{z,+}$ where $\ket{z}$ is a basis vector in the computational basis with $f(z)=1$
    $$\Phi(((I\otimes R_{\frac{\pi}{2t}})O_{f,\frac{1}{2}})^t\ket{z,+},\ket{z,+}) \le \gamma.$$
\end{lemma}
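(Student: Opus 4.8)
The plan is to recognize that the statement follows by chaining together the random-walk representation of the absolute angle difference with the concentration bound already established, so almost all the work has been front-loaded into the preceding lemmas. The one conceptual step is to view the random process $((I\otimes R_{\frac{\pi}{2t}})O_{f,\frac{1}{2}})^t$ as producing, for each realization of the oracle's successes and failures, a definite success/failure sequence $A = A_1,\ldots,A_t$ with $A_i \in \{rf,rs,lf,ls\}$, so that on this realization the state $((I\otimes R_{\frac{\pi}{2t}})O_{f,\frac{1}{2}})^t\ket{z,+}$ equals $U(A)\ket{z,+}$.

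First I would fix a basis vector $\ket{z}$ with $f(z)=1$ and condition on a realization of the random process, obtaining the sequence $A$. I would then define $X_i = 1$ if $A_i \in \{rf,ls\}$ and $X_i = -1$ otherwise, exactly as before. By \cref{lem:random-walk-upper-bound}, for this realization the absolute angle difference is exactly $\Phi(U(A)\ket{z,+},\ket{z,+}) = \frac{\pi}{2t}\left|\sum_{i=1}^t X_i\right|$, so the entire question reduces to controlling the magnitude of the simple symmetric random walk $\sum_i X_i$. Here I would invoke the (immediately preceding) lemma establishing that the $X_i$ are independent and uniform on $\{-1,1\}$; this is precisely where the choice $p=1/2$ is used, as it makes the left-decrease and right-decrease events equiprobable and removes any dependence on orientation.

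With the $X_i$ confirmed to be i.i.d.\ uniform signs, \cref{lem:prob-bound} gives that with probability at least $1-\delta$ we have $\left|\sum_{i=1}^t X_i\right| < \sqrt{6t\ln(2\delta^{-1})}$. Substituting into the random-walk formula yields, on the good event,
$$\Phi\bigl(((I\otimes R_{\frac{\pi}{2t}})O_{f,\frac{1}{2}})^t\ket{z,+},\ket{z,+}\bigr) < \frac{\pi}{2t}\sqrt{6t\ln(2\delta^{-1})} = \frac{\pi}{2}\sqrt{\frac{6\ln(2\delta^{-1})}{t}}.$$
Finally I would plug in the hypothesis $t \ge \frac{\frac{3}{2}\pi^2\ln(2\delta^{-1})}{\gamma^2}$, which rearranges precisely to $\frac{\pi}{2}\sqrt{\frac{6\ln(2\delta^{-1})}{t}} \le \gamma$, completing the bound.

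There is no genuine obstacle here, since the substance of the lemma lives in the deterministic angle-additivity (\cref{obs:sequence-actions}), the $p=1/2$ symmetry argument, and the Chernoff estimate. The only points requiring a little care are (i) making sure the per-step additivity of \cref{obs:sequence-actions} is correctly converted into an absolute value via \cref{lem:random-walk-upper-bound} even when the walk passes through angle $0$ (handled by the reflection at the origin, where the angle always increases on the next step), and (ii) noting that the strict inequality from the concentration bound together with the threshold on $t$ delivers the non-strict $\le \gamma$. The assumption $\delta \le \frac{1}{5}$ is not needed for the inequality itself and is carried only to keep the failure probability small enough for the downstream bounded-error analysis.
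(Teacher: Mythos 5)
Your proposal is correct and follows essentially the same route as the paper's own proof: represent the realization of the faulty oracle as a sequence $A$, apply \cref{lem:random-walk-upper-bound} to reduce the angle to $\frac{\pi}{2t}\left|\sum_i X_i\right|$, invoke \cref{lem:prob-bound} for the concentration, and check that the hypothesis on $t$ delivers the bound $\gamma$. Your side remarks (the role of $p=1/2$, the reflection at angle $0$, and the fact that $\delta\le\frac{1}{5}$ is not actually used in this argument) are all consistent with the paper.
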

\begin{proof}
    Let $A=A_1,\ldots,A_t$ where for each $i\in \{1,\ldots,t\}$, $A_i\in\{rf,rs,lf,ls\}$ be a sequence of random variables that describe the computation. Let $X_1,\ldots,X_t$ be defined such that $X_i = 1$ if and only if $A_i\in \{rf,ls\}$ and $-1$ otherwise. By \cref{lem:random-walk-upper-bound} we have
    $$\Phi(((I\otimes R_{\frac{\pi}{2t}})O_{f,\frac{1}{2}})^t\ket{z,+},\ket{z,+})\le \frac{\pi}{2t}\left|\sum_{i=r}^{t}X_i\right|.$$
    Then by \cref{lem:prob-bound} with probability at least $1-\delta$:
    \begin{align*}
        \frac{\pi}{2t}\left|\sum_{i=r}^{t}X_i\right| &\le \frac{\pi}{2t}\sqrt{6t\ln (2\delta^{-1})}\\
        &\le \frac{\pi\sqrt{\frac{3}{2}\ln (2\delta^{-1})}}{\sqrt{t}}\\
        &\le \gamma,
    \end{align*}
    the last inequality follows from $t \ge \frac{6\pi^2\ln(4^{-1}\delta^{-1})}{\gamma^2}$.
\end{proof}

\subsection{Generalizing to all Inputs}\label{sec:full-oracle-replacement}

% \manaswicom{Have we used the term Replacement Oracle before?}

In this section, we will continue the work of proving that $G_f^t$ acts roughly equivalent to $O_f$. Most of the work in this section is essentially to show that we can go from acting correctly on a subset of the basis vectors, as was shown with \cref{lem:reflection-rotation-identity}, to a general state. The first step is checking for product states when $f(z)=0$. This is very much the easy case since both the faulty and non-faulty oracles act the same way.

\begin{lemma}\label{lem:z-0-gives-identity}
    For any state $\ket{z,\psi}$ where $\ket{z}$ is a basis vector in the computational basis and $f(z)=0$,
    $$F_f^t \ket{z,\psi} = O_f \ket{z,\psi}$$
\end{lemma}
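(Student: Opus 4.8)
The plan is to exploit the fact that when $f(z)=0$ the faulty oracle carries no randomness at all, so the asserted equality is a deterministic identity that holds for \emph{every} realization of $O_{f,\frac{1}{2}}$ rather than merely with high probability. This is what makes the present case genuinely easy: no random-walk or concentration argument (as in \cref{lem:reflection-rotation-identity}) is required.

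First I would invoke \cref{obs:oracle-action}: when $f(z)=0$ the oracle $O_{f,\frac{1}{2}}$ acts as the identity on $\ket{z,\cdot}$ regardless of whether it "succeeds" or "fails", since $O_f\ket{z}\ket{s}=\ket{z}\ket{s\oplus f(z)}=\ket{z}\ket{s}$ and the failure branch is the identity by definition. Consequently, in each of the $t$ repetitions of the block $E$ inside $F_f^t$ (\cref{alg:robust-0-oracle}) the oracle contributes nothing, and the only operations acting non-trivially on $\ket{z,\psi}$ are the single-qubit rotations on the $S$ register. Reading the circuit, these rotations are $R_{-\pi/4}$ at the start, one $R_{\frac{\pi}{2t}}$ in each of the $t$ blocks, and a final $R_{-\pi/4}$. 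Since the $R_\theta$ all commute and compose additively ($R_\alpha R_\beta=R_{\alpha+\beta}$), their product on $S$ is $R_{-\pi/4}\,R_{\frac{\pi}{2t}}^{\,t}\,R_{-\pi/4}=R_{-\pi/4+\pi/2-\pi/4}=R_0=I$, so that $F_f^t\ket{z,\psi}=\ket{z,\psi}$ for every realization. Comparing with the fault-free oracle, linearity of $O_f$ gives $O_f\ket{z,\psi}=\ket{z,\psi}$ whenever $f(z)=0$, and hence $F_f^t\ket{z,\psi}=\ket{z,\psi}=O_f\ket{z,\psi}$.

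I expect essentially no obstacle here; the entire content is the bookkeeping that the two $R_{-\pi/4}$ rotations exactly cancel the accumulated $t\cdot\frac{\pi}{2t}=\frac{\pi}{2}$ of rotation. The only point worth a sentence of care is that the identity is claimed for an \emph{arbitrary} $\ket{\psi}$ on $S$, not just for the initialized state $\ket{0}_S$ in the intended use of $F_f^t$; but this costs nothing, since neither the (trivial) oracle action nor the composition of rotations depends on the input state of $S$.
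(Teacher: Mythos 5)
Your proposal is correct and follows essentially the same route as the paper: observe that both branches of $O_{f,\frac{1}{2}}$ act as the identity when $f(z)=0$, so $F_f^t$ reduces to the product of rotations $R_{-\pi/4}R_{\frac{\pi}{2t}}^{\,t}R_{-\pi/4}=R_0=I$ on the $S$ register, matching $O_f\ket{z,\psi}=\ket{z,\psi}$. The paper's proof leaves the cancellation of the accumulated angle implicit, whereas you spell it out, but there is no substantive difference.
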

\begin{proof}
    If $f(z)=0$ then $O_f\ket{z,\psi} = \ket{z,\psi}$. Furthermore from the definition of $O_{f,\frac{1}{2}}$, we have that when $f(z)=0$ then it always holds that $O_{f,\frac{1}{2}}(\ket{z,\psi}) = \ket{z,\psi}$. This means that 
    $$F_f^t(\ket{z,\psi}) = (I\otimes R_{-{\pi/4}}) (I\otimes R_{\frac{\pi}{2t}})^t (I\otimes R_{-{\pi/4}})\ket{z,\psi} = \ket{z,\psi}$$
\end{proof}

From \cref{lem:reflection-rotation-identity}, we can add the final couple of gates 
% \manaswicom{ref to the picture \cref{alg:robust-0-oracle} and mention the gates $R_{- \pi/4}$? 
% Might make it easier to see which gates we are talking about in the last line.
and look at how this affects a basis state in the computational basis when $f(z)=1$.
\begin{lemma}\label{lem:robust-0-oracle-concentration}
    Let $t \ge \frac{\frac{3}{2}\pi^2\ln(2\delta^{-1})}{\gamma^2}$ and let $F_f^t$ be defined as in \cref{alg:robust-0-oracle}. Then with probability at least $1-\delta$ for any basis vector $\ket{z,0}$ from the computational basis such that $f(z)=1$,
    $$\norm{(F_f^t-O_f)\ket{z,0}} \le \gamma$$
\end{lemma}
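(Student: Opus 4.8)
The plan is to connect Lemma~\ref{lem:reflection-rotation-identity}, which controls the bare iterated operator $((I\otimes R_{\frac{\pi}{2t}})O_{f,\frac{1}{2}})^t$ acting on $\ket{z,+}$, to the full circuit $F_f^t$ acting on $\ket{z,0}$. First I would recall from \cref{alg:robust-0-oracle} that $F_f^t$ sandwiches $t$ repetitions of $(I\otimes R_{\frac{\pi}{2t}})O_{f,\frac{1}{2}}$ between an initial $R_{-\pi/4}$ and a final $R_{-\pi/4}$. The initial rotation sends the input $\ket{z,0}$ to $\ket{z,+}$, so the iterated operator is applied exactly to the state covered by the previous lemma. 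Thus the first step is to invoke Lemma~\ref{lem:reflection-rotation-identity}: with probability at least $1-\delta$, for the chosen $t$ the middle block leaves us at a state whose absolute angle difference from $\ket{z,+}$ is at most $\gamma$.

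Next I would track what the remaining rotations do to the angle difference. Since $R_{-\pi/4}$ is a fixed rotation applied to both the true target $\ket{z,+}$ (which it sends back to $\ket{z,1}$, matching $O_f\ket{z,0}$) and to the actual output state, and since a global rotation preserves inner products, the absolute angle difference is invariant under applying $R_{-\pi/4}$ to both arguments. Concretely, $O_f\ket{z,0}=\ket{z,1}=(I\otimes R_{-\pi/4})\ket{z,+}$, and $F_f^t\ket{z,0}=(I\otimes R_{-\pi/4})\bigl(((I\otimes R_{\frac{\pi}{2t}})O_{f,\frac{1}{2}})^t\ket{z,+}\bigr)$, so
\[
\Phi(F_f^t\ket{z,0},\,O_f\ket{z,0}) = \Phi\bigl(((I\otimes R_{\frac{\pi}{2t}})O_{f,\frac{1}{2}})^t\ket{z,+},\,\ket{z,+}\bigr) \le \gamma.
\]
The final step is to pass from the angle bound to the Euclidean-distance bound demanded by the statement. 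This is exactly Observation~\ref{obs:replace-angle-difference}, which gives $\norm{\ket{z,\phi}-\ket{z,\psi}}\le \Phi(\ket{z,\phi},\ket{z,\psi})$ provided the two states have real coefficients and share the computational-basis label $\ket{z}$ — both conditions hold here since all the rotations and reflections involved have real entries. Chaining the inequalities yields $\norm{(F_f^t-O_f)\ket{z,0}}\le\gamma$ with probability at least $1-\delta$.

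I do not expect a serious obstacle in this lemma; the real analytic content was already extracted in the random-walk concentration argument of Lemma~\ref{lem:reflection-rotation-identity}. The only point requiring mild care is the bookkeeping of the two boundary $R_{-\pi/4}$ gates: one must verify that $R_{-\pi/4}$ maps $\ket{+}$ to $\ket{1}$ under the sign conventions of the rotation matrix $R_\theta$ defined in the preliminaries, so that the final output is genuinely compared against $O_f\ket{z,0}=\ket{z,1}$ rather than some other basis vector. Once that convention is pinned down, the invariance of $\Phi$ under a common rotation and the angle-to-distance conversion are immediate, and the probability $1-\delta$ is inherited directly from the single application of Lemma~\ref{lem:reflection-rotation-identity}.
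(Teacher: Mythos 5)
Your proposal is correct and follows essentially the same route as the paper: both reduce to Lemma~\ref{lem:reflection-rotation-identity} via the observation that the initial $R_{-\pi/4}$ maps $\ket{z,0}$ to $\ket{z,+}$, handle the final $R_{-\pi/4}$ by its unitary invariance (the paper phrases this as a triangle inequality whose first term vanishes since $(I\otimes R_{-\pi/4})\ket{z,+}=\ket{z,1}$), and convert angle to distance with Observation~\ref{obs:replace-angle-difference}. No gaps.
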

\begin{proof}
    Let $A=A_1,\ldots,A_t$ be a set of random variables that describe the computation. Then $U(A)$ is the actual unitary that is applied by $((I\otimes R_{\frac{\pi}{2t}})O_{f,\frac{1}{2}})^t$. 
    From \cref{alg:robust-0-oracle} we have $F_f^t = (I\otimes R_{-\pi/4})U(A)(I\otimes R_{-\pi/4})$.

    We are going to show that $F_f^t$ acts similarly to $O_f$ on all basis vectors from the computational basis $\ket{z,0}$. For this we split the basis vectors into two cases, $f(z)=0$ and $f(z) = 1$. The $f(z)=0$ follows from \cref{lem:z-0-gives-identity}. Therefore we are going to assume that $f(z)=1$.

    By the triangle inequality, \cref{obs:replace-angle-difference} and \cref{lem:reflection-rotation-identity} we have that with probability at least $1-\delta$: 
    \begin{align*}
        \norm{(F_f^t-O_f)\ket{z,0}} &= \norm{((I\otimes R_{-\pi/4})U(A)(I\otimes R_{-\pi/4})\ket{z,0}-\ket{z,1}}\\
        &= \norm{((I\otimes R_{-\pi/4})U(A)\ket{z,+}-\ket{z,1}}\\
        &\le \norm{((I\otimes R_{-\pi/4})\ket{z,+}-\ket{z,1}}+\norm{((I\otimes R_{-\pi/4})\ket{z,+}-(I\otimes R_{-\pi/4})U(A)\ket{z,+}}\\
        &\le \gamma
    \end{align*}
    completing the proof.
\end{proof}

Next, we show that applying our random process on a basis vector $\ket{z}_Z\ket{0}_S$ does not affect the input register $Z$, but only the output register when $f(z)=1$, in this case $S$ (see~\cref{alg:robust-0-oracle}). This will imply that all of our error is going to be concentrated in the $S$ register, which makes it easier to bound.

\begin{lemma}\label{lem:modify-only-output}
    Let $F_f^t$ be defined as in \cref{alg:robust-0-oracle}, and let $\ket{z,0}$ be a basis vector from the computational basis such that $f(z)=1$. Then there exists a state $\ket{\psi}$ (only dependent on the randomness of $F_f^t$) such that
    $$F_f^t\ket{z,0} = \ket{z,\psi}$$
\end{lemma}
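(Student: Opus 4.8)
The plan is to exploit the fact that, for a fixed input basis vector $\ket{z}$, every elementary gate making up $F_f^t$ leaves the input register $Z$ untouched, so that no entanglement between $Z$ and $S$ is ever created and the output is forced to be a product state. Concretely, I would first condition on the randomness of the faulty oracle: fixing a valid success/failure sequence $a = a_1,\dots,a_t$ turns the random process $((I\otimes R_{\frac{\pi}{2t}})O_{f,\frac{1}{2}})^t$ into the deterministic unitary $U(a)$ defined earlier, and hence turns $F_f^t$ into the deterministic unitary $(I\otimes R_{-\pi/4})\,U(a)\,(I\otimes R_{-\pi/4})$.

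Next I would check, gate by gate, that this deterministic unitary fixes the $Z$-register. The rotation gates $R_{-\pi/4}$ and $R_{\frac{\pi}{2t}}$ act as $I\otimes R$, so they manifestly leave $Z$ alone and only rotate the $S$-register. Each oracle factor appearing in $U(a)$ is either $I$, which is trivial, or $O_f$, which satisfies $O_f\ket{z}\ket{b}=\ket{z}\ket{b\oplus f(z)}$ and therefore keeps the first register equal to $\ket{z}$ while acting on $S$ as the bit-flip $X^{f(z)}$. Composing these factors, every operator in the product maps $\ket{z}\otimes\ket{\phi}_S$ to $\ket{z}\otimes\ket{\phi'}_S$ for some $\ket{\phi'}$, so the $Z$-register remains in the computational basis state $\ket{z}$ throughout the computation.

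Starting from $\ket{z,0}=\ket{z}_Z\ket{0}_S$ and applying $F_f^t$ conditioned on $a$ therefore yields a product state $\ket{z}_Z\otimes\ket{\psi^{(a)}}_S$, where $\ket{\psi^{(a)}}$ is the result of the corresponding sequence of rotations and bit-flips on the $S$-register and thus depends only on $a$. Writing $\ket{\psi}=\ket{\psi^{(a)}}$ for the realized sequence gives $F_f^t\ket{z,0}=\ket{z,\psi}$ with $\ket{\psi}$ depending only on the randomness of $F_f^t$, as claimed.

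I do not expect a genuine obstacle here; the content is structural rather than quantitative. The only point requiring care is the interplay with randomness: because $O_{f,\frac{1}{2}}$ is an irreversible random process, the statement should be read as holding for each fixed realization of the oracle's success/failure pattern, which is exactly what the phrase ``only dependent on the randomness of $F_f^t$'' encodes. Restricting to a single input basis vector $\ket{z}$ (rather than to a superposition over $Z$) is what guarantees the output is a genuine product state rather than a state entangled across $Z$ and $S$, and it is precisely this concentration of the error in the $S$-register that the subsequent analysis relies upon.
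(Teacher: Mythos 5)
Your proof is correct and follows essentially the same route as the paper: fix the realization $a$ of the oracle's randomness, observe that each gate of $(I\otimes R_{-\pi/4})U(a)(I\otimes R_{-\pi/4})$ acts on $\ket{z}\otimes\ket{\phi}_S$ as identity on $Z$ tensored with a rotation or (since $f(z)=1$) the bit-flip $X$ on $S$, and conclude $F_f^t\ket{z,0}=\ket{z}\otimes V(a)\ket{0}$. Your version merely spells out the gate-by-gate bookkeeping in more detail; there is no substantive difference.
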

\begin{proof}
    Fix $a$ as the sequence of computation. Then $F_f^t= (I\otimes R_{-\pi/4})U(a)(I\otimes R_{-\pi/4})$. Next observe that since $f(z)=1$, $O_f\ket{z,0} = (I\otimes X)\ket{z,0}$. If make this substitution for every $O_f$ unitary in $U(a)$, then we can write $F_f^t\ket{z,0} = (I\otimes V(a))\ket{z,0}$ for some unitary $V(a)$. Defining $\ket{\psi} = V(a)\ket{0}$ completes the proof.
\end{proof}

So far we have only looked at basis vectors, but we will now continue with arbitrary vectors and show that when we have $\ket{0}$ in the output state we still achieve the intended outcome for $F_f^t$.
\begin{lemma}\label{lem:robust-0-oracle-concentration-all}
    Let $t \ge \frac{\frac{3}{2}\pi^2\ln(2\delta^{-1})}{\gamma^2}$ and let $F_f^t$ be defined as in \cref{alg:robust-0-oracle}. Then with probability at least $1-\delta$ for any state $\ket{\phi,0}$,
    $$\norm{(F_f^t-O_f)\ket{\phi,0}} \le \gamma.$$
\end{lemma}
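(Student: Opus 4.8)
The plan is to reduce the general-input statement to the basis-vector case already established in \cref{lem:robust-0-oracle-concentration}. Write $\ket{\phi} = \sum_{z} \alpha_z \ket{z}$ in the computational basis of the input register, so that $\ket{\phi,0} = \sum_z \alpha_z \ket{z,0}$, and by linearity
$$
(F_f^t - O_f)\ket{\phi,0} = \sum_z \alpha_z\, (F_f^t - O_f)\ket{z,0}.
$$
The essential point I want to exploit is that the faulty oracle's coin flips are \emph{shared} across all branches of the superposition: in a single execution there is one success/failure sequence, and hence one realization of the random walk from \cref{lem:random-walk-upper-bound}, common to every $z$ with $f(z)=1$. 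Consequently the good event of \cref{lem:reflection-rotation-identity} (equivalently, of \cref{lem:robust-0-oracle-concentration}) is a \emph{single} event of probability at least $1-\delta$ that simultaneously guarantees $\norm{(F_f^t-O_f)\ket{z,0}} \le \gamma$ for every $z$ with $f(z)=1$; for $z$ with $f(z)=0$ the same bound holds deterministically with value $0$ by \cref{lem:z-0-gives-identity}. Thus I would condition on this one event and obtain the per-$z$ bound uniformly, with no union bound over the (exponentially many) values of $z$.

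Next I would combine the per-$z$ bounds using orthogonality. By \cref{lem:modify-only-output} we have $F_f^t\ket{z,0} = \ket{z,\psi}$ for $f(z)=1$, and $O_f\ket{z,0} = \ket{z,f(z)}$, so in both the $f(z)=0$ and $f(z)=1$ cases the vector $(F_f^t - O_f)\ket{z,0}$ has the form $\ket{z}\otimes\ket{\cdot}$, i.e.\ it lives entirely in the block indexed by the input-register value $z$. Hence the vectors $\{(F_f^t-O_f)\ket{z,0}\}_z$ are pairwise orthogonal, and by the Pythagorean identity, on the good event,
$$
\norm{(F_f^t-O_f)\ket{\phi,0}}^2 = \sum_z |\alpha_z|^2\, \norm{(F_f^t-O_f)\ket{z,0}}^2 \le \gamma^2 \sum_z |\alpha_z|^2 = \gamma^2,
$$
using $\sum_z|\alpha_z|^2 = 1$. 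Taking square roots gives the claimed bound $\norm{(F_f^t-O_f)\ket{\phi,0}} \le \gamma$ with probability at least $1-\delta$.

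The step I expect to be the main obstacle — or at least the one requiring the most care — is making the shared-randomness argument rigorous: I must argue that the concentration event in \cref{lem:reflection-rotation-identity} is genuinely the \emph{same} event across all $z$ with $f(z)=1$, rather than a family of per-$z$ events that would force a union bound (and thereby destroy the uniform $1-\delta$ guarantee). This rests on two facts that I would state explicitly: (i) the faulty oracle applies one common coin sequence to the whole superposition, and (ii) every $z$ with $f(z)=1$ undergoes the identical reflection-and-rotation dynamics within its own two-dimensional block $\mathrm{span}\{\ket{z,0},\ket{z,1}\}$, so the induced random walk, and therefore the state $\ket{\psi}$ from \cref{lem:modify-only-output}, is independent of $z$. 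Once this is pinned down, linearity and the block-orthogonality in the input register make the rest routine, and the argument is moreover insensitive to whether $\ket{\phi}$ carries additional entanglement with an external register, since orthogonality across distinct $\ket{z}$ is preserved.
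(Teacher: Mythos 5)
Your proposal is correct and follows essentially the same route as the paper: decompose $\ket{\phi,0}$ over computational basis states, dispose of the $f(z)=0$ branches via \cref{lem:z-0-gives-identity}, and use \cref{lem:modify-only-output} to observe that the shared oracle randomness produces a single $z$-independent error state $\ket{\psi}$, so one application of the basis-vector bound suffices with no union bound over $z$. The only (cosmetic) difference is the final combining step — you use orthogonality of the blocks $\ket{z}\otimes(\cdot)$ and the Pythagorean identity, while the paper factors the error as $\bigl(\sum_{z:f(z)=1}\alpha_z\ket{z}\bigr)\otimes(\ket{\psi}-\ket{1})$ and bounds the first factor's norm by $1$ — and both yield the same bound.
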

\begin{proof}
    Let $\ket{\phi} = \sum_{z}\alpha_z\ket{z}$ for basis vectors $\ket{z}$ in the computational basis. We have
    % Then we can rewrite the sum as
    \begin{align*}
        \norm{(F_f^t-O_f)\ket{\phi,0}} &= \norm{(F_f^t-O_f)\sum_{z}\alpha_z\ket{z,0}}\\
        &= \norm{\sum_{z : f(z)=0 }\alpha_z(F_f^t-O_f)\ket{z,0} + \sum_{z : f(z)=1 }\alpha_z(F_f^t-O_f)\ket{z,0}}.\\
    \end{align*}
    We have by \cref{lem:z-0-gives-identity} that when $f(z)=0$, $F_f^t\ket{z,0}=O_f\ket{z,0}$, enabling us to remove the first term of the last equation. On the other hand, when $f(z) = 1$, then $O_f\ket{z,0}=\ket{z,1}$. Thus
    \begin{align*}
         \norm{\sum_{z : f(z)=0 }\alpha_z(F_f^t-O_f)\ket{z,0} + \sum_{z : f(z)=1 }\alpha_z(F_f^t-O_f)\ket{z,0}}&=\norm{\sum_{z : f(z)=1 }\alpha_z (F_f^t\ket{z,0}-\ket{z,1})}.
    \end{align*}
    We can now without loss of generality assume that there exists a value $z_0$ such that $f(z_0)=1$, since otherwise the norm is $0$ and the statement holds.
    
    Next, by \cref{lem:modify-only-output}, when $f(z)=1$ we have $F_f^t\ket{z,0} = \ket{z,\psi}$ for some $\ket{\psi}$, where $\ket{\psi}$ is only dependent on the randomness of $F_f^t$. As this randomness is the same regardless of the value of $z$ we get, we have
    \begin{align*}
        \norm{\sum_{z : f(z)=1 }\alpha_z (F_f^t\ket{z,0}-\ket{z,1})} &= \norm{\sum_{z : f(z)=1 }\alpha_z (\ket{z,\psi}-\ket{z,1})}\\
        &= \norm{\left(\sum_{z : f(z)=1 }\alpha_z\ket{z}\right)\otimes(\ket{\psi}-\ket{1})}\\
        &\le \norm{\ket{z_0}\otimes(\ket{\psi}-\ket{1})}\\
        &= \norm{(F_f^t-O_f)\ket{z_0,0}},
    \end{align*}
    where the inequality follows since $\sum_{z : f(z)=1 }\alpha_z\ket{0}$ has norm at most $1$, and the last equality again using \cref{lem:modify-only-output}. Finally, applying \cref{lem:robust-0-oracle-concentration} completes the proof.
\end{proof}

So far we have worked on showing that $F_f^t$ acts correctly. Crucially,  it only acts correctly when the initial state of the output register is $\ket{0}$. This is however not enough to get it to be completely identical to $O_f$. To get around this, we add a scratchpad register $S$ which we have full control over, and $S$ initially starts in state $\ket{0}$. 
% This is the register denoted as $S$. 
By doing this, we can construct the circuit $G_f^t$ that approximately implements $O_f$.
\begin{theorem}\label{thm:bound-full-state}
    Let $t \ge \frac{6\pi^2\ln(4\delta^{-1})}{\gamma^2}$. Let $G_f^t$ be defined as in \cref{alg:robust-full-oracle}. Then with probability at least $1-\delta$ for any state $\ket{\phi}_{ZB}\ket{0}_S$,
    $$\norm{(G_f^t-((O_f)_{ZB}\otimes I_S))\ket{\phi}_{ZB}\ket{0}_S} \le \gamma.$$
\end{theorem}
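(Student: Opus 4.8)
The plan is to compare the actual circuit $G_f^t$ against an \emph{ideal} circuit $\tilde G$ obtained by replacing each of the two faulty sub-circuits $F_f^t$ with the exact oracle $O_f$ (acting on $Z,S$), and then to establish two facts: that $\tilde G$ exactly implements $(O_f)_{ZB}\otimes I_S$ on inputs with $S$ initialized to $\ket0$, and that $G_f^t$ is $\gamma$-close to $\tilde G$ with high probability. For the first fact I would propagate a computational basis input $\ket{z}_Z\ket0_S\ket b_B$ through the gate sequence (with $F_f^t$ replaced by $O_f$): the first $O_f$ writes $\ket{f(z)}$ into $S$, the $\mathrm{CNOT}_{S\to B}$ copies it into $B$ giving $\ket{b\oplus f(z)}$, and the subsequent $X_S$ makes $S$ equal to $\ket{1\oplus f(z)}$; the second $O_f$ then flips $S$ exactly on the $f(z)=1$ branch so that $S=\ket1$ in both cases, and the final $X_S$ resets $S$ to $\ket0$, leaving $\ket z_Z\ket0_S\ket{b\oplus f(z)}_B$. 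By linearity this gives $\tilde G\,\ket\phi_{ZB}\ket0_S=((O_f)_{ZB}\otimes I_S)\ket\phi_{ZB}\ket0_S$, so it remains to bound $\norm{(G_f^t-\tilde G)\ket\phi_{ZB}\ket0_S}$.

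The two copies of $F_f^t$ use independent randomness, so I would define the good event $E_j$ ($j=1,2$) that the $j$-th copy satisfies the conclusion of \cref{lem:robust-0-oracle-concentration-all} with error parameter $\gamma/2$ and failure parameter $\delta/2$. Each $E_j$ holds with probability at least $1-\delta/2$ exactly when $t\ge\frac{\frac32\pi^2\ln(4\delta^{-1})}{(\gamma/2)^2}=\frac{6\pi^2\ln(4\delta^{-1})}{\gamma^2}$, which is the stated hypothesis; a union bound then gives $\Pr[E_1\cap E_2]\ge1-\delta$, and I condition on $E_1\cap E_2$ for the remainder.

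The core estimate is a two-step hybrid argument via the triangle inequality, introducing the intermediate circuit $H$ in which only the first $F_f^t$ has been replaced by $O_f$. Peeling the first copy off the right and using that the remaining gates $X_S\,F_f^t\,X_S\,\mathrm{CNOT}$ form a unitary, the first hybrid distance equals $\norm{((F_f^t-O_f)_{ZS}\otimes I_B)\ket\phi_{ZB}\ket0_S}$, whose input has $S=\ket0$, so it is at most $\gamma/2$ under $E_1$. For the second step the relevant input is $\ket\eta=X_S\,\mathrm{CNOT}\,(O_f)_{ZS}\ket\phi_{ZB}\ket0_S$, and here the subtlety appears: $\ket\eta$ does \emph{not} have $S=\ket0$ everywhere, since by the basis computation above its $S$ register equals $\ket{1\oplus f(z)}$, i.e.\ $\ket0$ on the $f(z)=1$ components but $\ket1$ on the $f(z)=0$ components. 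I would split $\ket\eta=\ket{\eta_0}+\ket{\eta_1}$ according to the value of $f(z)$: on $\ket{\eta_0}$ both $F_f^t$ and $O_f$ act as the identity by \cref{lem:z-0-gives-identity}, so $(F_f^t-O_f)\ket{\eta_0}=0$, while $\ket{\eta_1}$ has $S=\ket0$ and norm at most $1$, so $E_2$ bounds $\norm{(F_f^t-O_f)\ket{\eta_1}}$ by $\gamma/2$. Adding the two hybrid distances yields the desired $\gamma$.

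Finally, the one routine gap to fill is that \cref{lem:robust-0-oracle-concentration-all} is stated for states $\ket{\phi}\ket0$ with no spectator $B$ register, whereas here $F_f^t$ acts on $Z,S$ while $Z$ may be entangled with $B$. I would lift the uniform operator bound across the spectator register by a Schmidt decomposition along the $Z{:}B$ cut: writing the state as $\sum_j\lambda_j\ket{u_j}_Z\ket0_S\ket{v_j}_B$ with orthonormal $\{\ket{v_j}\}$, orthogonality makes the squared output norm equal $\sum_j\lambda_j^2\norm{(F_f^t-O_f)\ket{u_j,0}}^2\le(\gamma/2)^2\sum_j\lambda_j^2$, so the $\gamma/2$ bound survives tensoring with $I_B$. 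The only genuinely delicate point is the second hybrid step: recognizing that the $X_S$ gates are placed precisely so that on the components where $F_f^t$ is merely approximate (namely $f(z)=1$) the $S$ register has been returned to $\ket0$, while on the $f(z)=0$ components the approximation is in fact exact. Everything else is bookkeeping.
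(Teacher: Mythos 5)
Your proposal is correct and follows essentially the same route as the paper's proof: a two-step hybrid (triangle-inequality) argument replacing each $F_f^t$ by $O_f$, a union bound over the two good events with parameters $\gamma/2$ and $\delta/2$, and the key observation that the input to the second $F_f^t$ has $S=\ket{0}$ exactly on the $f(z)=1$ components while the $f(z)=0$ components are handled exactly by \cref{lem:z-0-gives-identity}. Your Schmidt-decomposition step lifting the bound of \cref{lem:robust-0-oracle-concentration-all} across the spectator register $B$ makes explicit a point the paper passes over silently, but it is a detail within the same argument rather than a different approach.
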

\begin{proof}
    From \cref{alg:robust-full-oracle} we have 
    $$G_f^t = (((I_Z \otimes X_S)(F_f^t)_{ZS}(I_Z \otimes X_S))\otimes I_B) (I_Z\otimes CNOT_{SB}) ((F_f^t)_{ZS}\otimes I_B).$$
    We split this into three parts:
    $$K = ((F_f^t)_{ZS}\otimes I_B)$$
    $$L=(I_Z\otimes CNOT_{SB})$$
    $$M =(((F_f^t)_{ZS}(I_Z \otimes X_S))\otimes I_B)$$
    $$R = (I_{ZB} \otimes X_S)$$
    so that $G_f^t = R M L K$. Let $\ket{\phi} = \sum_{z}\sum_{b}\alpha_{zb}\ket{z}_Z\ket{b}_B$. 
    %\davidcom{Reconsider this formulation later}
    Using $\delta'=\delta/2$ and $\gamma'=\gamma/2$ we have by \cref{lem:robust-0-oracle-concentration-all} with probability $1-\delta/2$:
    \begin{equation}
        \norm{(K - (O_f)_{ZS}\otimes I_B)\ket{\psi}_{ZB}\ket{0}_S} \le \gamma/2.
    \end{equation}
    Next we expand $((O_f)_{ZS}\otimes I_B)\ket{\psi}_{ZB}\ket{0}_S$ as follows
    \begin{align}
        ((O_f)_{ZS}\otimes I_B)\ket{\psi}_{ZB}\ket{0}_S &= \sum_z \alpha_z((O_f)_{ZS}\otimes I_B)\ket{z,b}_{ZB}\ket{0}_S\nonumber\\
        &= \sum_z \alpha_z\ket{z,b}_{ZB}\ket{f(z)}_S.\label{eq:first-application-close}
    \end{align}
    Now applying $L$ we get 
    \begin{equation*}
        L\sum_z \alpha_z\ket{z,b}_{ZB}\ket{f(z)}_S = \sum_z \alpha_z\ket{z,b\oplus f(z)}_{ZB}\ket{f(z)}_S.
    \end{equation*}
    Finally, we are going to apply 
    $$M-(((O_f)_{ZS}(I_Z \otimes X_S))\otimes I_B) = (((F_f^t-O_f)_{ZS}(I_Z \otimes X_S))\otimes I_B)$$
    and take the norm which gives
    \begin{align}
        &\norm{(((F_f^t-O_f)_{ZS}(I_Z \otimes X_S))\otimes I_B)\sum_z \alpha_z\ket{z,b\oplus f(z)}_{ZB}\ket{f(z)}_S} \nonumber\\
        =&\norm{(((F_f^t-O_f)_{ZS})\otimes I_B)\sum_z \alpha_z\ket{z,b\oplus f(z)}_{ZB}\ket{f(z)\oplus 1}_S}\nonumber\\
        =&\norm{\left(\sum_{z:f(z)=0} \alpha_z((F_f^t-O_f)_{ZS}\otimes I_B)(\ket{z,b}_{ZB}\ket{1}_S) + \sum_{z:f(z)=1} \alpha_z((F_f^t-O_f)_{ZS}\otimes I_B)(\ket{z,b\oplus 1}_{ZB}\ket{0}_S)\right)}.\nonumber
    \end{align}
    For the first term we have that $((F_f^t-O_f)_{ZS}\otimes I_B)(\ket{z,b\oplus 1}_{ZB}\ket{0}_S = 0$, when $f(z)=0$ by \cref{lem:z-0-gives-identity}. This simplifies the calculation to
    \begin{align}
        =&\norm{\left(\sum_{z:f(z)=1} \alpha_z((F_f^t-O_f)_{ZS}\otimes I_B)(\ket{z,b\oplus 1}_{ZB}\ket{0}_S)\right)}\nonumber\\
        =&\norm{\left(\sum_{z:f(z)=1} \alpha_z((F_f^t-O_f)_{ZS})(\ket{z}_{Z}\ket{0}_S)\right)}\nonumber\\
        =&\norm{(F_f^t-O_f)_{ZS}\left(\sum_{z:f(z)=1} \alpha_z(\ket{z}_{Z}\ket{0}_S)\right)}.\nonumber
    \end{align}
    As $\sum_{z:f(z)=1} \alpha_z(\ket{z}_{Z}\ket{0}_S)$ has norm at most $1$ we then get by \cref{lem:robust-0-oracle-concentration-all} using $\delta'=\delta/2$ and $\gamma'=\gamma/2$ that with probability $1-\delta/2$:
    \begin{equation}\label{eq:second-application-close}
        \norm{(F_f^t-O_f)_{ZS}\left(\sum_{z:f(z)=1} \alpha_z(\ket{z}_{Z}\ket{0}_S)\right)}\le \gamma/2.
    \end{equation}
    Finally, it is straightforward to see that 
    \begin{equation}\label{eq:full-is-same}
        R(((O_f)_{ZS}(I_Z \otimes X_S))\otimes I_B)L(O_f)_{ZS}\otimes I_B = (O_f)_{ZB}\otimes I_S
    \end{equation}
    by studying the actions of basis vectors, since this is \cref{alg:robust-full-oracle} but with the $F_f^t$ replaced with $O_f$.
    
    With probability of $1-\delta$ from a union bound, we can now use the triangle inequality as well as \cref{eq:first-application-close}, \cref{eq:second-application-close} and \cref{eq:full-is-same} to bound the norm as:
    \begin{align*}
        &\norm{(G_f^t-((O_f^t)_{ZB}\otimes I_S))\ket{\phi,0}_{ZBS}}\\
        \le& \norm{RML(K - (O_f)_{ZS}\otimes I_B)\ket{\phi,0}_{ZBS}} \\
        &+ \norm{R(M-(((O_f)_{ZS}(I_Z \otimes X_S))\otimes I_B))L(O_f)_{ZS}\otimes I_B)\ket{\phi,0}_{ZBS}}\\
        &+ \norm{(R(((O_f)_{ZS}(I_Z \otimes X_S))\otimes I_B)L(O_f)_{ZS}\otimes I_B-(O_f)_{ZB}\otimes I_S)\ket{\phi,0}_{ZBS}}\\
        =& \gamma/2 + \gamma/2 + 0 = \gamma.
    \end{align*}
\end{proof}
%\begin{corollary}
%    Let $t = \Theta\left(\gamma^{-2}\ln \delta^{-1}\right)$ and $\delta \le \frac{2}{5}$. Let $G_f^t$ be defined as in \cref{alg:robust-full-oracle}. Then with probability at least $1-\delta$ for any state $\ket{\phi}_{ZB}\ket{0}_S$,
%    $$\norm{\tr_S \left(G_f^t(\ketbra{\phi}{\phi}_{ZB}\ketbra{0}{0}_S)(G_f^t)^*\right)-(O_f)_{ZB}\ketbra{\phi}{\phi}_{ZB}(O_f^*)_{ZB}} \le \gamma$$
%\end{corollary}
%\begin{proof}
%    Follows directly from \cref{thm:bound-full-state} and the fact that the partial trace can only decrease the distance.
%\end{proof}
\section{Beyond Boolean functions}\label{sec:multi-bit-functions}
We will now look at how the description from the previous section can be extended to oracles that do not output just a single bit.
% , but instead more information. 
We are going to look at a function of the form $f: \{0,1\}^n\to \{0,1\}^m$, with a corresponding oracle $O_f$ (see Section~\ref{sec: prelims}). We are still given access to the oracle $O_{f,p}$ and want to construct an oracle that is very close to $O_f$.

In this setting, we are going to continue to denote the input register as $Z$ and the output register as $\mathcal{H}_B = \bigotimes_{i=1}^{m}\mathcal{H}_{B_i}$ which is composed of $m$ single qubits, that is $\mathcal{H}_{B_i} \cong \mathbb{C}^2$ for all $i \in \{1, \dots, m\}$.

For this, we are going to observe that $f$ corresponds to $m$ boolean functions, each computing a single bit of the output. The idea is therefore going to be that we essentially perform the circuit from \cref{alg:robust-full-oracle} $m$ times in parallel, such that we error-correct each of the $m$ boolean functions. What is interesting is that we do not particularly have to touch the probability of success, since for the $m$ parallel repetitions they are going to be perfectly correlated with respect to their error.
% \manaswicom{todo: Last line not clear to me.}

We are going to define $G_f^t$ for non-Boolean functions simply by replacing each of the single qubit gates with gates that act on each of the qubits instead, which can be seen in \cref{alg:robust-0-oracle-multi} and \cref{alg:robust-full-oracle-multi}.

\begin{figure}
    \centering
        \begin{quantikz}
        \ket{z}_Z&\gate[2]{F_f^t}& \\
        \ket{0^m}_S&&
        \end{quantikz}=\begin{quantikz}
        \ket{z}_A&
        &\gate[2]{O_{f,\frac{1}{2}}}\gategroup[2,steps=2,style={rounded corners,dashed,inner sep=6pt, },label style={label
position=above right,anchor=north,xshift=0.9cm,yshift=0cm}]{$\times t$} &  & & & \\
        \ket{0^m}_S & \gate{\bigotimes_{i=1}^m R_{-\pi/4}} &  & \gate{\bigotimes_{i=1}^m R_{\frac{\pi}{2t}}} &  & \gate{\bigotimes_{i=1}^m R_{-\pi/4}} &
        \end{quantikz}
    \caption{Robust calculation of $O_{f}\ket{z,0}$ for non-Boolean $f$.}
    \label{alg:robust-0-oracle-multi}
\end{figure}
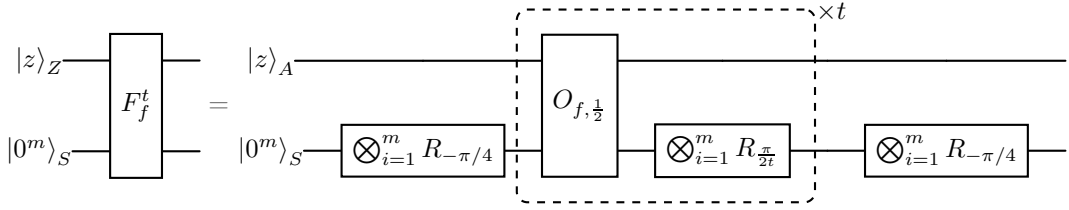

\begin{figure}
    \centering
        \begin{quantikz}
        \ket{z}_Z&\gate[3,nwires=2]{G_f^t}& \\
        \setwiretype{n}&&\\
        \ket{b}_B&&
        \end{quantikz}=\begin{quantikz}
        \ket{z}_Z& & \gate[2]{F_f^t} & & & \gate[2]{F_f^t} & & \\
        \setwiretype{n}&\ket{0^m}_S&\setwiretype{q} & \ctrl{1} & \gate{\bigotimes_{i=1}^m X} &  & \gate{\bigotimes_{i=1}^m X} &\\
        \ket{b}_B&&&\targ{}&&&&
        \end{quantikz}
    \caption{Robust calculation of $O_{f}\ket{z,b}$ for non-Boolean $f$. 
    %\davidcom{Check how to write CNOT for multiple pairs}
    }
    \label{alg:robust-full-oracle-multi}
\end{figure}

\begin{theorem}\label{thm:bound-non-bool-function}
     Let $t \ge \frac{6\pi^2 m^2\ln(4\delta^{-1})}{\gamma^2}$. Let $G_f^t$ be defined as in \cref{alg:robust-full-oracle-multi}. Then with probability at least $1-\delta$ for any state $\ket{\phi}_{ZB}\ket{0}_S$,
    $$\norm{(G_f^t-((O_f^t)_{ZB}\otimes I_S))\ket{\phi}_{ZB}\ket{0^m}_S} \le \gamma.$$
\end{theorem}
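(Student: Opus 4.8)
The plan is to reduce the multi-bit statement to the Boolean analysis of \cref{sec:boolean-functions} by exploiting the fact that, conditioned on the success/failure sequence of the faulty oracle, the multi-bit circuit $F_f^t$ of \cref{alg:robust-0-oracle-multi} factorizes as a tensor product over the $m$ output qubits. Concretely, write $f_j(z) = f(z)_j$ for the $j$-th output bit. For a fixed sequence $a = a_1, \ldots, a_t$ of oracle successes and failures, the rotations $\bigotimes_{i=1}^m R$ act qubit-wise, and whenever the oracle succeeds it applies $O_f$, which XORs $f(z)$ into $S$ and hence reflects each qubit $j$ with $f_j(z) = 1$ independently. Thus, on the $S$ register, $F_f^t = \bigotimes_{j=1}^m (F_{f_j}^t)_j$, where $(F_{f_j}^t)_j$ is exactly the single-qubit Boolean circuit of \cref{alg:robust-0-oracle} for the bit-function $f_j$ driven by the same sequence $a$; the ideal target acting on $\ket{z,0^m}$ likewise factorizes as $\bigotimes_j O_{f_j}$.

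The key observation, and the reason the query bound carries only a single $\ln(4\delta^{-1})$ rather than $\ln(4m\delta^{-1})$, is that all qubits share the \emph{same} random walk. Every active qubit (those with $f_j(z) = 1$) starts at $\ket{+}$, which is on the right by convention, and undergoes identical reflection-and-rotation dynamics under the common sequence $a$. Hence at every step each active qubit sits at the same signed angle from $\ket{+}$, so by \cref{obs:sequence-actions} its absolute angle difference $\Phi$ is identical across all active qubits. There is therefore a single $\pm 1$-valued walk $X_1, \ldots, X_t$, and \cref{lem:random-walk-upper-bound} gives $\Phi = \frac{\pi}{2t}\left|\sum_{i=1}^t X_i\right|$ simultaneously for every active qubit. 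Applying \cref{lem:prob-bound} with failure probability $\delta' = \delta/2$, with probability $1 - \delta/2$ we obtain $\Phi \le \frac{\pi}{2t}\sqrt{6t\ln(2(\delta')^{-1})} \le \gamma/(2m)$, where the last inequality uses $t \ge \frac{6\pi^2 m^2 \ln(4\delta^{-1})}{\gamma^2}$.

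To convert this per-qubit angle bound into an error bound on $F_f^t$, I would first reprove the multi-bit analogues of \cref{lem:z-0-gives-identity}, \cref{lem:modify-only-output}, and \cref{lem:robust-0-oracle-concentration-all}: the inactive qubits ($f_j(z) = 0$) evolve as identity exactly as before, the circuit leaves $Z$ untouched and acts as a fixed unitary on $S$ for fixed randomness, and the reduction from an arbitrary input superposition to a single basis vector $\ket{z_0}$ is unchanged. For a basis state $\ket{z, 0^m}$ the error then factorizes, and the standard telescoping bound $\norm{\bigotimes_j \ket{\psi_j} - \bigotimes_j \ket{\phi_j}} \le \sum_j \norm{\ket{\psi_j} - \ket{\phi_j}}$ for unit vectors, together with $\norm{(F_{f_j}^t - O_{f_j})\ket{z,0}} \le \Phi \le \gamma/(2m)$ for each of the at most $m$ active qubits (via \cref{obs:replace-angle-difference}), yields $\norm{(F_f^t - O_f)\ket{\phi, 0^m}} \le \gamma/2$ with probability $1 - \delta/2$.

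Finally, I would run the three-step decomposition $G_f^t = RMLK$ exactly as in the proof of \cref{thm:bound-full-state}, with the single CNOT and $X_S$ replaced by their bit-wise versions $\bigotimes_i X$ and the qubit-wise CNOT between $S$ and $B$. The two invocations of $F_f^t$ each contribute error $\gamma/2$ under the per-application bound just established, and a union bound over the two invocations (each with failure probability $\delta/2$) gives overall failure probability at most $\delta$; the triangle inequality then assembles these into the claimed bound $\gamma$. The main obstacle I anticipate is making the tensor-product factorization and the perfect-correlation claim fully rigorous, in particular verifying that the shared success/failure randomness keeps every active qubit at the identical signed angle, so that a single walk governs all of them and no union bound over the $m$ qubits (which would cost an extra factor of $\ln m$ inside $t$) is needed.
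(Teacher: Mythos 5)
Your proposal is correct and rests on the same two pillars as the paper's proof: the multi-bit circuit factorizes bit-wise over the $m$ output qubits, and because every bit is driven by the \emph{same} realization of $O_{f,\frac{1}{2}}$ the $m$ single-bit analyses are perfectly correlated, so a single concentration event (costing $\ln(4\delta^{-1})$ rather than $\ln(4m\delta^{-1})$) suffices and the only price of the reduction is the $\gamma/m$ per-bit accuracy, which is exactly where the $m^2$ in the bound on $t$ comes from. The difference is purely organizational, namely \emph{where} you factorize. You decompose $F_f^t = \bigotimes_j (F_{f_j}^t)_j$, re-derive multi-bit analogues of \cref{lem:z-0-gives-identity}, \cref{lem:modify-only-output} and \cref{lem:robust-0-oracle-concentration-all}, and then re-run the $RMLK$ decomposition of \cref{thm:bound-full-state}. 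The paper factorizes one level higher: it writes $G_f^t = \prod_{i=1}^m (G_{f_i}^t)_{ZS_iB_i}\otimes I_{S_{-i}B_{-i}}$ (the factors commute, since each only reads $Z$ in the computational basis and otherwise touches the disjoint registers $S_iB_i$), telescopes over $i$, and invokes \cref{thm:bound-full-state} as a black box with $\gamma'=\gamma/m$ for each factor, with the perfect-correlation remark ensuring all $m$ bounds hold simultaneously with probability $1-\delta$. The paper's route buys you exactly the thing you flag as your main obstacle: the tensor-product and shared-randomness claims need to be verified only once, at the level of whole $G_{f_i}^t$ blocks, instead of being threaded through the random-walk analysis and each intermediate lemma. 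Your finer-grained version is sound (the single shared walk does govern every active qubit, since they all start at $\ket{+}$ and see the same reflection/rotation sequence), but it is more work than the statement requires.
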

\begin{proof}
    For the function $f$ we define $f_i, i\in \{1,\ldots,m\}$, such that $f_i(z)$ is the $i$'th bit of $f(z)$. Let $I_{S_{-i}}=\bigotimes_{j=1,j\neq i}^{m} I_{S_j}$ and similar for $I_{B_{-i}}$. We see that $G_f^t = \prod_{i=1}^m (G_{f_i}^t)_{ZS_iB_i}\otimes I_{S_{-i}B_{-i}}$, and that $O_f^t = \prod_{i=1}^m (O_{f_i})_{ZB_i} \otimes  I_{B_{-i}}$, where it should be noted that the $G^t_{f_i}$ are defined by \cref{alg:robust-full-oracle} and are perfectly correlated with each other since they share the same underlying $O_{f,\frac{1}{2}}$. So from a couple of triangle inequalities, we get that 
    \begin{align*}
        &\norm{G_f^t-((O_f^t)_{ZB}\otimes I_S))\ket{\phi}_{ZB}\ket{0^m}_S}\\
        \le & \max_{\ket{\psi}_{ZB}}\norm{G_f^t-((O_f^t)_{ZB}\otimes I_S))\ket{\psi}_{ZB}\ket{0^m}_S}\\
        =& \max_{\ket{\psi}_{ZB}}\norm{\left(\left(\prod_{i=1}^m (G_{f_i}^t)_{ZS_iB_i}\otimes I_{S_{-i}B_{-i}}\right)-\left(\left(\prod_{i=1}^m (O_{f_i})_{ZB_i} \otimes  I_{B_{-i}}\right)\otimes I_S\right)\right)\ket{\psi}_{ZB}\ket{0^m}_S} \\
        \le & \sum_{i=1}^m\max_{\ket{\psi}_{ZB_i}}\norm{\left(\left(G_{f_i}^t\right)_{ZS_iB_i}-\left(\left( O_{f_i}\right)_{ZB_i}\otimes I_{S_i}\right)\right)\ket{\psi}_{ZBi}\ket{0}_{S_i}}.
    \end{align*}
    Now by \cref{thm:bound-full-state}, with $\gamma'=\gamma/m$, we can bound each term in the sum. Furthermore, since each $G_{f_i}^t$ uses the same underlying oracle, we get that they are perfectly correlated in when they either succeed or fail. This means that with probability $1-\delta$:
    $$\sum_{i=1}^m\max_{\ket{\psi}_{ZB_i}}\norm{\left(\left(G_{f_i}^t\right)_{ZS_iB_i}-\left(\left( O_{f_i}\right)_{ZB_i}\otimes I_S\right)\right)\ket{\psi}_{ZB_i}\ket{0^m}_S} \le \sum_{i=1}^m \gamma/m = \gamma.$$
\end{proof}

\section{Error-correcting arbitrary algorithms: Proof of Theorem~\ref{thm: Informal Main Theorem}}
\label{sec:any-algorithm}
We now have a way to efficiently construct a robust oracle, both for Boolean functions due to \cref{thm:bound-full-state} and for non-Boolean functions due to \cref{thm:bound-non-bool-function}. From this, it is going to be straightforward to convert any algorithm into a version that is robust with respect to faulty oracles $O_{f,p}$ where $p\le \frac{1}{2}$ is known. We are simply going to replace each occurrence of the fault-free oracle in the original algorithm with our robust version of the oracle.

We are going to model an algorithm as a series of oracle calls separated by unitaries and we let the algorithm act on three different registers, $Z$, $B$ and $T$. Furthermore, we allow the oracle to act only on the registers $Z$ and $B$, such that $T$ corresponds to the register where the algorithm can both store information between oracle calls and perform additional computations.
% corresponds to the register where the algorithm is both able to store information between oracle calls and are able to perform other computations. 

Letting $q$ be the number of fault-free oracle calls, we can write the unitary that our arbitrary algorithm implements as: 
$$V = U_q (O_f\otimes I_T) U_{q-1} (O_f\otimes I_T) \ldots U_1 (O_f\otimes I_T)U_0.$$
For the robust algorithm, we furthermore give it access to the space $S$. This means that we can define the unitary where we have replaced each oracle call with a robust version:

$$V' = (U_q\otimes I_S) (G_f^t\otimes I_T) (U_{q-1}\otimes I_S)  (G_f^t\otimes I_T) \ldots (U_1\otimes I_S)  (G_f^t\otimes I_T) U_0.$$

\begin{theorem}\label{thm:bound-algorithm}
     Let $t \ge \frac{6\pi^2 q^2 m^2\ln(4q\delta^{-1})}{\gamma^2}$. Let $G_f^t$ be defined as in \cref{alg:robust-full-oracle-multi}. Let 
     $$V = U_q (O_f\otimes I_{T}) U_{q-1} (O_f\otimes I_{T}) \ldots U_1 (O_f\otimes I_{T}),$$
     $$V' = (U_q\otimes I_S)  (G_f^t\otimes I_T) \ldots (U_1\otimes I_S) (G_f^t\otimes I_T).$$
     Then with probability at least $1-\delta$ for any state $\ket{\phi}_{ZBT}$,
    $$\norm{((V\otimes I_S)-V')\ket{\phi}_{ZBT}\otimes\ket{0^m}_S} \le \gamma.$$
\end{theorem}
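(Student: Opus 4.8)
The plan is to prove this by a hybrid argument, replacing the $q$ noisy oracle calls $G_f^t$ with the ideal oracle calls $O_f$ one at a time and accumulating the error via the triangle inequality. The key facts I will lean on are Theorem~\ref{thm:bound-non-bool-function}, which guarantees that a single $G_f^t$ call is $\gamma'$-close to $O_f \otimes I_S$ on any state of the form $\ket{\psi}_{ZB}\ket{0}_S$, and the basic operator-norm fact that unitaries do not increase the $\ell^2$ distance between states. Crucially, the per-call bound from Theorem~\ref{thm:bound-non-bool-function} holds only when the scratch-pad register $S$ is in the state $\ket{0^m}$, so the hybrid argument must be set up so that each time $G_f^t$ is applied it is acting on a state whose $S$-register is $\ket{0^m}$.

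\textbf{Setting up the hybrids.} I would define intermediate operators $V_k$ for $k = 0, \dots, q$ that use the ideal oracle $O_f$ for the first $k$ calls and the robust oracle $G_f^t$ for the remaining $q-k$ calls; thus $V_0 = V'$ and $V_q = V \otimes I_S$ (up to the identity padding on $S$). The telescoping identity
\begin{align*}
    (V \otimes I_S) - V' = \sum_{k=1}^{q} (V_k - V_{k-1})
\end{align*}
then reduces the problem to bounding each $\norm{(V_k - V_{k-1})\ket{\phi}_{ZBT}\ket{0^m}_S}$. For a fixed $k$, the operators $V_k$ and $V_{k-1}$ agree everywhere except at the $k$-th oracle call, where one applies $(O_f \otimes I_{TS})$ and the other applies $(G_f^t \otimes I_T)$. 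Writing $\ket{\eta_k}$ for the state reached just before this $k$-th call (i.e.\ after applying $U_0$ and the first $k-1$ robust oracle calls interleaved with the $U_i$'s), and noting that all the later gates are unitary and hence norm-preserving, each hybrid difference collapses to
\begin{align*}
    \norm{(V_k - V_{k-1})\ket{\phi}_{ZBT}\ket{0^m}_S} = \norm{\left((O_f \otimes I_{TS}) - (G_f^t \otimes I_T)\right)\ket{\eta_k}}.
\end{align*}

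\textbf{Applying the single-call bound.} Here is the one point requiring care: to invoke Theorem~\ref{thm:bound-non-bool-function} on $\ket{\eta_k}$, I need the $S$-register of $\ket{\eta_k}$ to be $\ket{0^m}$. This is where the clean-up structure of $G_f^t$ (Figure~\ref{alg:robust-full-oracle-multi}) pays off: because each $G_f^t$ uncomputes the scratch-pad, it maps states with $S = \ket{0^m}$ back to states with $S = \ket{0^m}$, and the algorithm unitaries $U_i$ act only on $ZBT$ and leave $S$ untouched. Hence by induction the $S$-register is $\ket{0^m}$ immediately before every robust call, so $\ket{\eta_k} = \ket{\chi_k}_{ZBT}\ket{0^m}_S$ for some (randomness-dependent) state $\ket{\chi_k}$, and Theorem~\ref{thm:bound-non-bool-function} applies directly. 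Choosing the per-call parameters $\gamma' = \gamma/q$ and $\delta' = \delta/q$, and observing that $t \ge \frac{6\pi^2 q^2 m^2 \ln(4q\delta^{-1})}{\gamma^2} = \frac{6\pi^2 m^2 \ln(4\delta'^{-1})}{\gamma'^2}$ meets the hypothesis of Theorem~\ref{thm:bound-non-bool-function} with these parameters, each term is at most $\gamma/q$ with probability at least $1 - \delta/q$.

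\textbf{Finishing and the main obstacle.} A union bound over the $q$ calls guarantees that all $q$ single-call bounds hold simultaneously with probability at least $1 - \delta$, and summing via the triangle inequality gives $\norm{((V\otimes I_S) - V')\ket{\phi}_{ZBT}\ket{0^m}_S} \le q \cdot (\gamma/q) = \gamma$. The subtle point I expect to be the main obstacle is the correlation structure of the randomness: the $q$ robust oracle calls all draw from the \emph{same} underlying faulty oracle process, so the events ``call $k$ is accurate'' are not independent. This does not break the argument, since the union bound only needs each event to hold with high marginal probability and never uses independence — but I would need to state carefully that the $\gamma/q$ bound for each hybrid term is a statement about the marginal probability of that call's randomness, and that the telescoping/triangle-inequality step is a deterministic consequence once we condition on the (high-probability) event that all $q$ calls are simultaneously accurate. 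A second, minor point is bookkeeping the identity paddings ($I_S$, $I_T$, $I_{TS}$) consistently so that $V_q$ genuinely equals $V \otimes I_S$; this is routine but must be done to make the telescoping identity literally correct.
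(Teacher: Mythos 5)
Your proposal is correct and follows essentially the same route as the paper's proof: a telescoping hybrid over the $q$ oracle positions, the per-call guarantee of \cref{thm:bound-non-bool-function} instantiated with $\gamma'=\gamma/q$ and $\delta'=\delta/q$ (which your hypothesis on $t$ exactly matches), and a union bound that, as you correctly note, requires no independence across the correlated calls. The one imprecise point is your justification that the $S$-register equals $\ket{0^m}$ before the differing call by appealing to $G_f^t$ \emph{exactly} uncomputing the scratch-pad --- that is only approximately true --- but with your hybrid ordered as you define it (the calls preceding the differing one are the \emph{ideal} ones, which act as the identity on $S$), the register is exactly $\ket{0^m}$ there and no such appeal is needed; this is, if anything, cleaner than the paper's hybrid, which places the robust calls first.
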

\begin{proof}
    The proof simply follows from applying the triangle inequality several times. Define 
    $$V_i = (U_q (O_f\otimes I_{ST}) U_{q-1} \ldots (O_f\otimes I_T)U_i(G_f^t\otimes I_{ST})\ldots U_1 (G_f^t\otimes I_{ST})U_0.$$
    Then we have that $V_q = V'$ and $\tr_S V_0 = V$, where $\tr_S V_0$ denotes the partial trace with respect to $S$. For any $i$ we further have that with probability $1-\frac{\delta}{q}$ by \cref{thm:bound-non-bool-function}:
    \begin{equation*}\label{eq:telescope-alg}
        \norm{(V_{i}-V_{i-1})\ket{\psi}_{ZBT}\ket{0^m}_S} \le \max_{\ket{\psi}_{ZBT}} \norm{((O_f\otimes I_{ST})-(G_f^t\otimes I_T))\ket{\psi}_{ZBT}\ket{0^m}_S}\le \frac{\gamma}{q}.
    \end{equation*}
    Due to the fact that $V_{i-1}$ an $V_i$ only differ in the operator between $U_{i-1}$ and $U_{i}$. We can now use this equation $q$ times to get by a union bound that with probability at least $1-\delta$:
    \begin{align*}
        \norm{(V_q-V_0)\ket{\psi}_{ZBT}\ket{0^m}_S} &\le \sum_{i=1}^{q} \max_{\ket{\psi}_{ZBT}}\norm{(V_i-V_{i-1})\ket{\psi}_{ZBT}\ket{0^m}_S}\\
        &\le \sum_{i=1}^{q}\frac{\gamma}{q} = \gamma
    \end{align*}
    using the triangle inequality in the first inequality.
\end{proof}

From this, we can conclude that we can efficiently and robustly simulate the fault-free query algorithm with respect to the faulty oracle. The robust algorithm then discards the register $S$ and performs the same measurement as the fault-free algorithm. The correctness of the robustified algorithm can be seen via the following corollary, for small enough constants $\gamma$ and $\delta$.

% \manaswicom{Instead of the next line we could say that the robustification algorithm takes a partial trace wrt S as a final step and then does the same measurement as fault-free algo?} \davidcom{I can see the idea, but I am not sure how to formulate that?} \manaswicom{Let's discuss this when we meet.} \davidcom{When we meet? Are we having a meeting?}

% If we instead want a statement with respect to the trace distance, then we can make the following statement. 
\begin{corollary}
    Let $t \ge \frac{6\pi^2 q^2 m^2\ln(4q\delta^{-1})}{\gamma^2}$. Let $G_f^t$ be the quantum channel as defined in \cref{alg:robust-full-oracle-multi}. Let 
    $$V = U_q (O_f\otimes I_{T}) U_{q-1} (O_f\otimes I_{T}) \ldots U_1 (O_f\otimes I_{T})$$
    $$V' = (U_q\otimes I_S)  (G_f^t\otimes I_T) \ldots (U_1\otimes I_S) (G_f^t\otimes I_T).$$
    Then for any state $\ket{\phi}_{ZBT}$,
    $$T(V\ketbra{\phi}V^{*},\tr_{S}V'(\ketbra{\phi}_{ZBT}\otimes \ketbra{0^m}_S)V'^{*}) \le \gamma+\delta.$$
\end{corollary}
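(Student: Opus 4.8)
The plan is to convert the operator-norm closeness established in Theorem~\ref{thm:bound-algorithm} into trace-distance closeness of the output states, while correctly accounting for the fact that the faulty-oracle process is random: the bound of Theorem~\ref{thm:bound-algorithm} holds only with probability $1-\delta$, so the final corollary must fold this failure probability into the trace distance. The first step is to condition on the $(1-\delta)$-probability event, call it $\mathcal{E}$, on which $\norm{((V\otimes I_S)-V')\ket{\phi}_{ZBT}\ket{0^m}_S}\le\gamma$. Conditioned on $\mathcal{E}$, both $V\ket{\phi}$ and the (pure) state produced by $V'$ are pure states at distance at most $\gamma$, so by Lemma~\ref{lem:dist-to-trace-dist} their trace distance is at most $\gamma$. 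The subtlety is that $V'$ depends on the random success/failure sequence, so $\tr_S V'(\ketbra{\phi}\otimes\ketbra{0^m})V'^{*}$ is really an average (a mixture) over all these random executions, i.e.\ $\rho' = \mathbb{E}_{a}[\tr_S V'_a(\ketbra{\phi}\otimes\ketbra{0^m})V'^{*}_a]$.

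The key step is therefore to control the trace distance of this mixture to the ideal output $V\ketbra{\phi}V^{*}$ using convexity of trace distance together with the conditioning. Concretely I would split the mixture into the part coming from $\mathcal{E}$ and the part from its complement: writing $\rho' = (1-\delta')\,\sigma_{\mathcal{E}} + \delta'\,\sigma_{\bar{\mathcal{E}}}$ where $\delta'=\Pr[\bar{\mathcal{E}}]\le\delta$ and $\sigma_{\mathcal{E}},\sigma_{\bar{\mathcal{E}}}$ are the (normalized) conditional output states. By the triangle inequality and the fact that trace distance is bounded by $1$,
\begin{align*}
T(V\ketbra{\phi}V^{*},\rho') &\le (1-\delta')\,T(V\ketbra{\phi}V^{*},\sigma_{\mathcal{E}}) + \delta'\,T(V\ketbra{\phi}V^{*},\sigma_{\bar{\mathcal{E}}})\\
&\le (1-\delta')\,\gamma + \delta'\cdot 1 \le \gamma + \delta,
\end{align*}
where the bound on the first conditional distance comes from averaging the per-execution bound $T\le\gamma$ (valid on $\mathcal{E}$ by Lemma~\ref{lem:dist-to-trace-dist}) and using convexity of trace distance again within $\mathcal{E}$, and the second is bounded trivially. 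This is exactly the advertised $\gamma+\delta$.

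Two points need care. First, within the event $\mathcal{E}$ the output is still a mixture over many sequences $a$, so I must invoke joint convexity of $T(\cdot,\cdot)$ one more time to push the average over executions inside the trace distance; since $V\ketbra{\phi}V^{*}$ is a fixed target this is immediate. Second, I should note that discarding the $S$ register (the partial trace $\tr_S$) can only decrease trace distance — the partial trace is a quantum channel and trace distance is contractive under channels — so the bound obtained before tracing out $S$ survives, and in fact the purity used in Lemma~\ref{lem:dist-to-trace-dist} applies to the pre-trace states $V\ket{\phi}\ket{0^m}$ and $V'_a\ket{\phi}\ket{0^m}$, giving $T\le\gamma$ on $\mathcal{E}$ before the trace, hence $\le\gamma$ after.

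The main obstacle I anticipate is purely bookkeeping rather than mathematical depth: keeping the quantifier order straight between the randomness of the oracle (over which the $1-\delta$ guarantee holds) and the fixed input state $\ket{\phi}$, and making sure the convexity argument is applied to the correct mixture so that the failure branch contributes at most $\delta$ and not more. Everything else reduces to Lemma~\ref{lem:dist-to-trace-dist}, Theorem~\ref{thm:bound-algorithm}, and the contractivity/convexity properties of trace distance.
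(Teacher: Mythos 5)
Your proposal is correct and follows essentially the same route as the paper's proof: decompose the output mixture into the good branch (probability $\ge 1-\delta$, handled via Lemma~\ref{lem:dist-to-trace-dist} and convexity of trace distance) and the bad branch (bounded trivially by $1$), then use contractivity of trace distance under the partial trace over $S$. Your explicit invocation of joint convexity within the good event is a slightly more careful rendering of a step the paper leaves implicit, but it is not a different argument.
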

\begin{proof}
    We let $\rho = V'(\ketbra{\phi}\otimes \ketbra{0^m})V'^{*}$. From this $\rho$ can be written as a sum $\rho = \alpha \rho_1 + \beta\rho_2$ with $\alpha+\beta = 1$ where $\rho_1$ is a mixed state 
    % of pure states 
    that satisfy $\norm{((V\otimes I_S)-V')\ket{\phi}_{ZBT}\otimes\ket{0^m}_S} \le \gamma$, while $\rho_2$ consists of states that does not satisfy this. Then by \cref{thm:bound-algorithm} we have $\beta \le \delta$. Furthermore by \cref{lem:dist-to-trace-dist} and \cref{thm:bound-algorithm} we have $T(\rho_1,(V\ketbra{\phi}V^{*})\otimes \ketbra{0^m})\le \gamma$. From this we get that
    $$T((V\ketbra{\phi}V^{*})\otimes \ketbra{0^m},\rho) \le \alpha T(V\ketbra{\phi}V^{*}\otimes \ketbra{0^m},\rho_1) +  \beta T(V\ketbra{\phi}V^{*}\otimes \ketbra{0^m},\rho_2) \le \delta + \gamma.$$
    Then the statement follows from the fact that applying the partial trace can only decrease the partial trace.
\end{proof}

\section{Conclusion}
\label{sec: conclusion}
In this paper, we demonstrate that under the faulty oracle model, all quantum query algorithms can be made robust to noise, implying that a quantum advantage is preserved for any quantum algorithm that has a near super-cubic advantage over its classical counterpart. Specifically, if a super-polynomial quantum advantage exists for a given problem, such an advantage is retained even under the noisy oracle. This robustness is achieved by replacing each oracle call of a $q$-query quantum algorithm in the fault-free case with a subroutine that approximates each fault-free oracle call to within a distance of $O(1/q)$. This procedure incurs a multiplicative overhead of $O(q^2 \log q)$ queries to the faulty oracle, and its correctness is analyzed using a one-dimensional random walk.

We conclude by highlighting some open problems. Our robustification subroutine, as described above, replaces each fault-free oracle call with a subroutine that queries the faulty oracle $O(q^2 \log q)$ times. Is this overhead optimal, or can it be improved? A lower bound of $\Omega(q)$ queries to the faulty oracle follows from~\cite{RS08}.
% Our algorithm applies to all quantum query algorithms, as it robustifies each fault-free oracle call without leveraging the structural properties of the algorithm in question. A natural question is whether structural properties of specific quantum algorithms can be exploited to achieve more efficient robustification under the faulty oracle. 
% Specifically, is there a $q$-query quantum algorithm that can be made robust to the faulty oracle with only a $O(\text{polylog}(q))$ overhead? 
Our algorithm works under the assumption that the error-rate $p \leq 1/2$ (see Section~\ref{sec: prelims} for the definition of error-rate) of the  faulty oracle is known. 
We believe that the algorithm can be adapted to handle the case where 
$p$ is unknown, and we mention this as an open question for future work. 
Finally, an important area for further investigation is understanding the impact of other, more “natural” noise models (e.g., decoherence and depolarizing noise, see~\cite{childs2023streaming, R03}) on quantum query algorithms.
\bibliography{ref}
\end{document}